\theoremstyle{definition}
\newtheorem{definition}{Definition}[section]
\theoremstyle{theorem}
\newtheorem{proposition}{Proposition}[section]
\theoremstyle{definition}
\theoremstyle{remark}
\theoremstyle{definition}
\newtheorem{remark}{Remark}[section]
\newtheorem{example}{Example}[section]
\DeclareMathOperator{\pr}{\operatorname{pr}}
\DeclareMathOperator{\Rbb}{\mathbb{R}}
\DeclareMathOperator{\del}{\partial}
\DeclareMathOperator{\D}{\operatorname{d}}
\DeclareMathOperator{\Pf}{\operatorname{Pf}}
\let\sgn\relax\DeclareMathOperator{\sgn}{\operatorname{sgn}}
\newcommand{\iprod}{\mathbin{\lrcorner}}
\title{Generalized Geometry of 2D Incompressible Fluid Flows}
\author{Radek Suchánek \\ Department of Mathematics and Statistics \\ Masaryk University, Brno}
\date{\today}
\begin{document}

\maketitle

\tableofcontents

\begin{abstract}
We describe a family of generalized almost structures associated with a Monge-Ampère equation for a stream function of a 2D incompressible fluid flows. Using an indefinite metric field constructed from a pair of $2$-forms related to the Monge-Ampère equation, we show the existence of generalized metric compatible structures in our family of generalized structures. Integrability of isotropic structures on the level of Dirac structures and differential forms is discussed.
\end{abstract}


\section{Introduction}


In \cite{Crainic2004GeneralizedCS}, Crainic described conditions on tensor fields to define a generic generalized complex structure \cite{Hitchin-gen-calabi-yau, Gualtierri2011}, and showed how Hitchin pairs of differential forms are connected with these structures. B. Banos showed in \cite{BANOS2007841} that a certain class of non-linear second order PDEs called Monge-Ampère equations gives rise to Hitchin pairs and the corresponding generalized complex geometry and generalized Calabi-Yau geometry \cite{Gualtieri2007BranesOP, Kosmann-Schwarzbach2010, Gualtierri2011}. The notion of integrability then translates to a certain local equivalence problem of Monge-Ampère equations and Jacobi system of non-linear first order PDEs with special non-linearity \cite{Rubtsov2019, Kosmann-Schwarzbach2010}. Motivated by these results, we focus on a different approach towards construction of generalized geometries and show that the family of generalized geometries described by Banos can by extended when dealing with incompressible fluids in 2D. In particular, we prove the existence of generalized Kähler structure and generalized chiral structures, as well as generalized hyper-complex and generalized hyper-para-complex triples. Applications of these structures in the context of twisted supersymmetric non-linear sigma models were studied in \cite{hu2019commuting}. 

The idea behind searching for geometric structures connected with Monge-Ampère equations is two fold. Firstly, if one is interested in concrete examples of generalized structures with specific properties, Monge-Ampère theory provides numerous examples of geometric objects from which the generalized structures can be constructed \cite{kushner_lychagin_rubtsov_2006, Kosmann-Schwarzbach2010, Rubtsov2019}. Secondly, investigating geometry of equations may lead to better understanding and possible improvement of certain models. Example of usefulness of this approach is provided in \cite{Rubtsov2019, roulstone2009kahler, RoubRoul2001, VolRoul2015, phdthesis, mcintyre2001there,Roulstone-Sewell}.


\section{Monge-Ampère theory}\label{section: M-A theory}


All our manifolds will be real and smooth. Let $\mathcal{B}$ be (a real, smooth) two dimensional manifold. Denote by $x,y$ (local) coordinates on $\mathcal{B}$. In the following definition, we will describe a subclass of nonlinear second-order PDEs, with the nonlinearities given by minors and determinant of the hessian matrix.

\begin{definition}
\textit{Monge-Ampère equation on $\mathcal{B}$} is the following nonlinear second-order PDE
    \begin{align}\label{eq: M-A equation in coordinates}
        A f_{xx} + 2B f_{xy} + C f_{yy} + D \left( f_{xx} f_{yy} - {f_{xy}}^2 \right) + E  = 0  \ ,
    \end{align}
where $f_{xy} : = \frac{\partial^2 f}{\partial x \partial y}$, and $A,B,C,D,E$ are smooth functions, which depend on $x,y, f_x, f_y$. 
\end{definition}


\begin{example}
We provide a few standard examples of M-A equations. From \eqref{eq: M-A equation in coordinates} we see that the 2D Laplace equation $f_{xx} + f_{yy} = 0$, as well as the 2D wave equation $f_{xx} - f_{yy} = 0$, are constant coefficients M-A equations. These are particularly interesting in view of proposition \ref{Lychagin-Rubtsov theorem}. Next examples are $\det \operatorname{Hess} (f) = 1$ and $\det \operatorname{Hess} (f) = -1$, which will play a key role in this paper, as they arise from the Navier-Stokes equations for the incompressible 2D fluid flows. Moreover, by virtue of proposition \ref{Lychagin-Rubtsov theorem}, these equations are equivalent with the 2D Laplace and wave equations via a symplectic transformation. Let us end this short overview of M-A equations with the von Karman equation $f_x f_{xx} - f_{yy} = 0$. We will see in example \ref{example: Pfaffian for von Karma} that the von Karman equation has, in a certain sense, very different behaviour then the previous examples. 
\end{example}


Let $T^* \mathcal{B} \xrightarrow{\pi} \mathcal{B}$ be the cotangent bundle over $\mathcal{B}$. Throughout this paper, $\Omega$ will always denote the canonical symplectic form on the cotangent bundle. In the following, we will show how M-A equations can be encoded by a certain pair of differential $2$-forms on $T^* \mathcal{B}$, called \textit{M-A structure}. We start with the following observation. In the Darboux coordinates $x,y,p,q$ ($x,y$ are the base coordinates on $\mathcal{B}$), the canonical symplectic form writes as
    \begin{align}\label{eq: symplectic form}
        \Omega = \D x\wedge \D p + \D y\wedge \D q \ .
    \end{align}
Let $\alpha$ be a $2$-form given by 
    \begin{align}\label{def: general alpha for 2D SMAE}
        \begin{split}
        \alpha & = A \D p \wedge \D y + B( \D x \wedge \D p - \D y \wedge \D q) \\ 
        & \qquad + C \D x \wedge \D q + D \D p \wedge \D q + E \D x \wedge \D y \ , 
        \end{split}
    \end{align} 
where the coefficients $A,B,C,D,E$ are smooth functions on $T^*\mathcal{B}$. Then $\alpha \wedge \Omega = 0$ holds for all possible choice of the coefficients.


\begin{definition}[\cite{kushner_lychagin_rubtsov_2006, Lychagin}]\label{def: M-A structure}
A pair $(\Omega, \alpha) \in \Omega^2 \left( T^*\mathcal{B} \right) \times \Omega^2 \left( T^*\mathcal{B} \right)$ is called \textit{a 2D Monge-Ampère structure over $T^*\mathcal{B}$} (or just \textit{M-A structure}), if $\Omega$ is a symplectic form and the \textit{effectivity condition} $\alpha \wedge \Omega = 0$ is satisfied.
\end{definition}


\begin{remark}\label{remark: notation of coefficients}
Note that the coefficients $A,B, \ldots, E$ in \eqref{def: general alpha for 2D SMAE} are functions on $T^*\mathcal{B}$, while the coefficients $A,B, \ldots, E$ given in \eqref{eq: M-A equation in coordinates} denoted functions on $\mathcal{B}$. We often denote by the same symbol a function $H \in C^\infty \left( T^*\mathcal{B} \right)$, which depends on $x,y,p,q$, and its restriction to the graph of $\D f$, which depends on $x,y,f_x, f_y$. That is, we write $H = H(x,y,p,q)$, as well as $H = H(x,y,f_x, f_y) \equiv (\D f)^* H$. Similarly, we write $F = F(x,y) \in C^\infty\mathcal{B}$ as well as $F = F(x,y,0,0) \in C^\infty \left( T^*\mathcal{B} \right)$. The domain of a function under consideration will always be clear from the context or stated explicitly. 
\end{remark} 

\subsection{From M-A structures to M-A equations}

We will now describe the link between M-A structures and M-A equations \cite{Lychagin-Rubtsov-1983, Lychagin-Rubtsov-Chekalov}. This will justify the nomenclature of definition \ref{def: M-A structure} and also explain the seemingly unnatural factor of $2$ in the second summand of \eqref{eq: M-A equation in coordinates}. Let $f \colon \mathcal{B} \to \mathbb{R}$ be a smooth function. The differential of $f$ determines a section, $\D f \colon \mathcal{B} \to T^*\mathcal{B}$, given by $\D f (x) : = \D_x f$, and we can pullback $\alpha$ onto $\mathcal{B}$ to obtain a top form $(\D f )^* \alpha \in \Omega^2(\mathcal{B})$. Then the equation
    \begin{align}\label{def: M-A equation}
        (\D f )^* \alpha = 0
    \end{align}
defines a nonlinear second order PDE for $f$. If $\alpha$ is given by \eqref{def: general alpha for 2D SMAE}, then the equation \eqref{def: M-A equation} corresponds exactly to \eqref{eq: M-A equation in coordinates}. For a detailed exposition on the theory of Monge-Ampère equations and their applications, particularly in dimensions two and three, see \cite{kushner_lychagin_rubtsov_2006}. For further details about applications of Monge-Ampère theory, for example in the theory of complex differential equations see \cite{BANOS20112187}, theoretical meteorology, and incompressible fluid theory, see \cite{Rubtsov_1997, RoubRoul2001, roulstone2009kahler, phdthesis, VolRoul2015}.


\begin{example}\label{example: differential forms for the Laplace and von Karman eqs}
Let $\alpha = - \D x \wedge \D q + \D y \wedge \D p$. Then 
    \begin{align*}
        (\D f )^* \alpha =  - \D x \wedge \D(f_y) + \D y \wedge \D(f_x) = (-f_{yy} -f_{xx} ) \D x \wedge \D y \ . 
    \end{align*}
Hence $(\D f )^* \alpha = 0$ amounts to the 2D Laplace equation $f_{yy} + f_{xx} = 0$ and $(\Omega, \alpha)$ is the corresponding M-A structure. Now let $\alpha = p \D p \wedge dy + \D x \wedge \D q$. Then $(\D f )^* \alpha = 0$ describes the von Karman equation $f_x f_{xx} - f_{yy} = 0$. 
\end{example}


\begin{definition}[\cite{kushner_lychagin_rubtsov_2006}]\label{def: Pfaffian, elliptic/hyperbolic/normalized structure}
\textit{Pfaffian} of a M-A structure, $\Pf (\alpha) $, is defined by 
    \begin{align}\label{def: Pfaffian}
        \alpha \wedge \alpha = \Pf (\alpha) \Omega \wedge \Omega \ .
    \end{align}
M-A structure is called \textit{non-degenerate}, if the Pfaffian is nowhere-vanishing. If $\Pf (\alpha) > 0$, then she structure is called \textit{elliptic}, and if $\Pf (\alpha) < 0$, then it is called \textit{hyperbolic}. We call a non-degenerate structure \textit{normalized}, if $|\Pf (\alpha)| = 1$. 
\end{definition} 


For a general 2D M-A structure, the Pfaffian in canonical coordinates writes 
    \begin{align}\label{Pfaffian in canonical coordinates}
        \Pf (\alpha) = -B^2 + AC - DE \ .
    \end{align}
    
\begin{example}\label{example: Pfaffian for von Karma}
Let $(\Omega, \alpha)$ be the M-A structure described in example \ref{example: differential forms for the Laplace and von Karman eqs}, i.e. structure for 2D Laplace equation. Then $\operatorname{Pf} (\alpha) = 1$. Now consider the M-A structure for the von Karman equation $f_x f_{xx} - f_{yy} = 0$. Then $\operatorname{Pf} (\alpha) = p$. In section \ref{subsection: endomorphisms and symmetric form}, we will see that the 2D Laplace equation give rise to integrable complex structure, while the von Karman equation does not. 
\end{example}


\begin{remark}
In the modelling of stably stratified geophysical flows, the Pfaffian is related with the the Rellich's parameter \cite{viudez_dritschel_2002, dritschel_viudez_2003, Rubtsov2019}.
\end{remark}


A non-degenerate M-A structure can be normalized $(\Omega, \alpha) \mapsto (\Omega, n(\alpha) )$ by $n (\alpha) : = { |\Pf (\alpha)| }^{ -\frac{1}{2} } \alpha$. Indeed, the Pfaffian of $n (\alpha)$ satisfies $|\Pf \left( n(\alpha) \right)| = 1$, since
    \begin{align*}
        n (\alpha) \wedge n (\alpha) =  { |\Pf (\alpha)| }^{-1} \alpha \wedge \alpha = \frac{ \Pf (\alpha) }{ |\Pf (\alpha)| } \Omega \wedge \Omega \ ,
    \end{align*}
which implies
    \begin{align}\label{eq: Pfaffian of a normalized structure}
        \Pf \left( n(\alpha) \right) = \sgn \Pf (\alpha) \ .
    \end{align}
A non-degenerate M-A structure $(\Omega, \alpha)$ and its normalization $(\Omega, n(\alpha) )$ correspond to the same M-A equation, since 
    \begin{align*}
         (\D f)^* n(\alpha) = \left. { |\Pf (\alpha)| }^{ -\frac{1}{2} }\right\vert_{\operatorname{Im} \D f} (\D f)^* \alpha = 0 \ . 
    \end{align*}


\begin{remark}
Notice that we could have rescaled $\alpha$ with an arbitrary non-vanishing function, which would result in a new M-A structure $(\Omega, \Tilde{\alpha})$. Then, by the same argument as for the normalization, $(\Omega, \Tilde{\alpha})$ defines the same M-A equation as $(\Omega, \alpha)$. The main reason we choose to work with normalized structures is that some of the generalized geometries constructed from $\alpha$ are not invariant with respect to the rescaling. Hence the normalization condition provides a consistent choice of the representative in the class $[\alpha]$, where $\tilde{\alpha} \in [\alpha]$, if and only if $\Tilde{\alpha} = e^{h}\alpha$ for some function $h$.
\end{remark}

\subsection{Endomorphisms and symmetric form}\label{subsection: endomorphisms and symmetric form}

In the following paragraphs, we want to describe how every non-degenerate M-A structure defines a field of endomorphisms, which square to either $\operatorname{Id}_{T \left( T^*\mathcal{B} \right)}$, or $-\operatorname{Id}_{T \left( T^*\mathcal{B} \right)}$. Before we do so, we need to fix some notation.

\hfill 

\noindent \textbf{Notation.} Tensor $\sigma \in \Gamma \left(T^*\mathcal{B} \otimes T^*\mathcal{B} \right)$ can be identified with a $C^\infty(\mathcal{B})$-linear map $\sigma_\# \colon \Gamma \left( T\mathcal{B} \right) \to \Gamma \left( T^*\mathcal{B} \right)$ defined by $ \sigma_\# (X) : = X \iprod \sigma$. Similarly, if $\tau \in \Gamma \left( T\mathcal{B} \otimes T\mathcal{B} \right)$, then we denote by $\tau^\#$ the linear map $\tau^\# \colon \Gamma \left(T^*\mathcal{B} \right) \to \Gamma \left( T\mathcal{B} \right)$, where $\tau^\# (\xi) : = \xi \iprod \tau$. Now consider a non-degenerate $2$-form $\alpha \in \Omega^2(T^*\mathcal{B})$, which means that $\alpha_\# \colon T\mathcal{B} \to T^*\mathcal{B}$ is an isomorphism. Using the inverse $(\alpha_\#)^{-1} \colon T^*\mathcal{B} \to T\mathcal{B}$, we define the bivector $\pi_\alpha \in \Gamma \left( \Lambda^2 T \left( T^*\mathcal{B} \right) \right)$ by 
    \begin{align}\label{def: bivector corresponding to 2-form}
        (\pi_\alpha)^\# : = (\alpha_\#)^{-1} \ . 
    \end{align}
For example, if $\Omega = \D x\wedge \D p + \D y\wedge \D q$ is the canonical symplectic form, then $\pi_\Omega = \del_x \wedge \del_p + \del_y \wedge \del_q $ is the corresponding bivector, where $\partial_x : = \frac{\partial}{\partial_x}$ is the $x$-coordinate vector field (and similarly for the other fields $\partial_i$).

When working with matrices, we will use the underline notation to make the distinction between a morphism and its matrix representation. For example, if $\rho \in \operatorname{End} \left( T \left( T^*\mathcal{B} \right) \right)$, then corresponding matrix will be denoted $\underline{\rho}$ (which will always be understood with respect to the canonical coordinates of $\Omega$). Also, we will be omitting the $\#$ symbol when dealing with the morphisms derived from $2$-forms (and $2$-vectors), e.g. the matrix of $\alpha_\#$ will be denoted simmply $\underline{\alpha}$. Similarly, the matrix of $(\pi_\alpha)^\# $ will be denoted simply by $\underline{\alpha}^{-1}$ (see defining equation \eqref{def: bivector corresponding to 2-form}). When dealing with generalized structures, we write $\mathbb{J}$ for coordinate-free description, as well as for the coordinate description via matrices. Nevertheless there is no space for confusion, since block description of generalized structure either contains coordinate-free objects or their matrices, and the distinction between the two is clear in our notation.

\hfill 

\noindent \textbf{Almost complex and almost product structure.} Let $\rho \in \operatorname{End} \left( T \left( T^*\mathcal{B} \right) \right)$ be an endomorphism defined by
    \begin{align}\label{def: rho_alpha tensor}
        \rho : ={ | \Pf (\alpha) | }^{ - \frac{1}{2} } \pi_\Omega^\# \circ \alpha_\# \ .
    \end{align}
If $(\Omega, \alpha)$ is elliptic, then $\rho^2 = -\operatorname{Id}_{ T \left( T^*\mathcal{B} \right) }$, if $(\Omega, \alpha)$ is hyperbolic, then $\rho^2 = \operatorname{Id}_{ T \left( T^*\mathcal{B} \right) } $. The fact that $\rho$ is either an almost complex structure on $T^*\mathcal{B}$ (if $\Pf (\alpha) > 0$), or an almost product structure (if $\Pf (\alpha) < 0$), was proven in \cite{Lychagin-Rubtsov-1983}. In the canonical coordinates,
    \begin{align}\label{rho structure matrix - general case}
        \underline{\rho} = { | \Pf (\alpha) | }^{- \frac{1}{2} }
            \begin{pmatrix}
            B & -A & 0 & -D \\
            C & -B & D & 0 \\
            0 & E & B & C \\
            -E & 0 & -A & -B 
            \end{pmatrix} 
    \end{align}
and, as expected, it follows that 
    \begin{align*}
        \underline{\rho}^2 = \frac{-\Pf (\alpha)}{ | \Pf (\alpha) | } \operatorname{Id}_{ T \left( T^*\mathcal{B} \right) } = -\sgn \Pf (\alpha) \operatorname{Id}_{ T \left( T^*\mathcal{B} \right) } \ . 
    \end{align*}
Notice that $\rho$ is invariant of the normalization, since \eqref{eq: Pfaffian of a normalized structure} yields
    \begin{align*}
        \sgn \Pf \left( n(\alpha) \right) = \sgn^2 \Pf (\alpha) = \sgn \Pf (\alpha) \ . 
    \end{align*}

\hfill 

\noindent \textbf{Integrability.} V. Lychagin and V. Rubtsov showed in \cite{Lychagin-Rubtsov-1983} that there is a~direct link between local equivalence of M-A equations and integrability of the $\rho$ structure derived from the corresponding M-A structure $(\Omega, \alpha)$. Moreover, the integrability condition can be expressed as a certain closedness condition. 

\begin{proposition}[Lychagin-Rubtsov \cite{Lychagin-Rubtsov-1983}]\label{Lychagin-Rubtsov theorem}
A 2D symplectic M-A equation $(\D f)^* \alpha = 0$ can be locally transformed via a symplectic transformation to either the Laplace equation $\Delta f = 0$, or the wave equation $\Box f = 0$, if and only if the $\rho$ structure is integrable, which is equivalent to $ \frac{ \alpha }{ \sqrt{ | \Pf (\alpha) | } }$ being closed. 
\end{proposition}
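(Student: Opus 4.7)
The plan is to split the biconditional into two implications linked through the integrability of $\rho$: \emph{local symplectic equivalence to Laplace/wave} $\Longleftrightarrow$ \emph{$\rho$ integrable} $\Longleftrightarrow$ \emph{$n(\alpha)$ closed}. The proof thus has two independent analytical pieces — a Nijenhuis-vs-exterior-derivative calculation, and a simultaneous Newlander--Nirenberg / Darboux argument — glued at the integrability condition.

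For the equivalence \emph{$\rho$ integrable} $\iff$ \emph{$n(\alpha)$ closed}, I would first invoke the observation that $\rho$ is invariant under normalization and work with $\tilde\alpha := n(\alpha)$, so $|\Pf(\tilde\alpha)| = 1$ and the scaling factor in \eqref{def: rho_alpha tensor} drops out, giving $\rho = \pi_\Omega^\# \circ \tilde\alpha_\#$. Since $\pi_\Omega^\# = (\Omega_\#)^{-1}$ by \eqref{def: bivector corresponding to 2-form}, this yields the key algebraic identity $\Omega(\rho X, Y) = \tilde\alpha(X,Y)$, together with $\Omega(\rho\cdot,\rho\cdot) = -\sgn\Pf(\alpha)\,\Omega$ arising from antisymmetry of $\tilde\alpha$ and $\rho^2 = -\sgn\Pf(\alpha)\operatorname{Id}$. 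I would then expand the Nijenhuis tensor
\begin{equation*}
N_\rho(X,Y) := [\rho X,\rho Y] - \rho[\rho X,Y] - \rho[X,\rho Y] + \rho^2[X,Y],
\end{equation*}
pair it with $\Omega$, and apply Cartan's formula $\mathcal L_Z = \D\circ (Z\iprod\,\cdot) + Z\iprod\D(\cdot)$ together with $\D\Omega = 0$ and the effectivity $\tilde\alpha\wedge\Omega = 0$ to collapse everything into an identity relating $\Omega(N_\rho(X,Y),Z)$ to $\D\tilde\alpha$ evaluated on $X,Y,Z$ (and their $\rho$-images). Nondegeneracy of $\Omega$ then forces $N_\rho = 0 \iff \D\tilde\alpha = 0$.

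For \emph{$\rho$ integrable} $\iff$ \emph{local symplectic equivalence to Laplace/wave}, the forward direction proceeds by applying the appropriate classical integration theorem and simultaneously normalizing $\Omega$. In the elliptic case ($\Pf > 0$), Newlander--Nirenberg produces local holomorphic coordinates for $\rho$; in the hyperbolic case ($\Pf < 0$), the two $\pm 1$-eigendistributions of $\rho$ are integrable by Frobenius, producing local null coordinates. In both cases $\Omega$ is a closed nondegenerate $2$-form compatible with $\rho$, so a (para-)Kähler Darboux argument yields coordinates which are simultaneously $\rho$-adapted and Darboux for $\Omega$. Reading off $\tilde\alpha_\# = \Omega_\# \circ \rho$ in these coordinates returns $\tilde\alpha = -\D x\wedge \D q + \D y\wedge \D p$ in the elliptic case, which is the M-A structure for the Laplace equation by example \ref{example: differential forms for the Laplace and von Karman eqs}, and the analogous hyperbolic expression for the wave equation. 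The converse is immediate: for the Laplace/wave structures, \eqref{rho structure matrix - general case} yields a constant matrix, hence $N_\rho \equiv 0$.

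The main obstacle I anticipate is the pairing calculation in the first step. The Nijenhuis tensor is a vector-valued $2$-form while $\D\tilde\alpha$ is a scalar-valued $3$-form, so relating them requires both the effectivity condition and $\D\Omega = 0$ to be used essentially; without effectivity there is no reason the single equation $\D\tilde\alpha = 0$ should control the full Nijenhuis obstruction. A secondary hazard is the sign bookkeeping between elliptic and hyperbolic — the factor $\sgn\Pf(\alpha)$ threads through $\rho^2$, through the $\Omega$-compatibility of $\rho$, and through the choice of Newlander--Nirenberg versus Frobenius — and these signs must be tracked uniformly to present both cases in one proof.
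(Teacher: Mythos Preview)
The paper does not supply its own proof of this proposition: it is stated with attribution to \cite{Lychagin-Rubtsov-1983} and no argument follows. There is therefore nothing in the paper to compare your proposal against.

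That said, your outline is a reasonable reconstruction of how such a proof goes, and the two-step architecture (Nijenhuis $\Leftrightarrow$ closedness, then integrability $\Leftrightarrow$ normal form) is the standard one. A few cautions. First, your sign claim $\Omega(\rho\cdot,\rho\cdot) = -\sgn\Pf(\alpha)\,\Omega$ is correct, but note that in the elliptic case this makes $\rho$ \emph{anti}-compatible with $\Omega$ rather than compatible in the usual almost-K\"ahler sense; the ``(para-)K\"ahler Darboux argument'' you invoke must therefore be adapted accordingly, and is not literally the standard K\"ahler--Darboux lemma. Second, in the Nijenhuis-vs-$\D\tilde\alpha$ step you correctly flag that effectivity $\tilde\alpha\wedge\Omega = 0$ is essential --- without it the equivalence fails already for general almost-K\"ahler $4$-manifolds --- but your sketch does not indicate \emph{where} effectivity enters the Cartan-formula manipulation; in practice one uses that in dimension $4$ effectivity forces $\tilde\alpha(\rho X,Y) = -\tilde\alpha(X,\rho Y)$ (equivalently $\rho^*\tilde\alpha = -\sgn\Pf(\alpha)\,\tilde\alpha$), and this symmetry is what collapses the six Lie-bracket terms into a single $\D\tilde\alpha$. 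Making that step explicit would close the main gap in your first half.
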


\hfill

\noindent \textbf{Symmetric bilinear form.} Every M-A structure defines the following symmetric bilinear form on $T^*\mathcal{B}$ \cite{kushner_lychagin_rubtsov_2006, RoubRoul2001}
    \begin{align}\label{eq: L-R def}
         g (X,Y) : = \frac{  2( X \iprod \alpha \wedge Y \iprod \Omega + Y \iprod \alpha \wedge X \iprod \Omega) \wedge \pi^*\operatorname{vol}  }{\Omega \wedge \Omega} \ ,
    \end{align}
where $X,Y \in \Gamma \left(T (T^* \mathcal{B}) \right)$, $\iprod$ is the interior product, and $\pi^* \operatorname{vol} \in \Omega^2 \left( T^*\mathcal{B} \right)$ is the pullback of a locally chosen top form $\operatorname{vol} \in \Omega^2 \left( \mathcal{B} \right)$ along the cotangent bundle projection $\pi \colon T^* \mathcal{B} \to \mathcal{B} $. Since in our considerations the symplectic form is fixed, we see that the above symmetric field is parameterized by a $2$-form, and the definition \eqref{eq: L-R def} amounts to a mapping $\Omega^2 \left( T^* M \right) \to S^2 \left( T^*\mathcal{B} \right)$, given by $\alpha \mapsto g$. The matrix of $g$ in canonical coordinates is 
    \begin{align}\label{eq: L-R symmetric tensor}
        \underline{g} = 
        \begin{pmatrix}
        2C & -2B & D & 0 \\
        -2B & 2A & 0 & D \\
        D & 0 & 0 & 0 \\
        0 & D & 0 & 0
        \end{pmatrix} \ ,
    \end{align}
where $A,B,C,D$ are the coefficients of $\alpha$ as in \eqref{def: general alpha for 2D SMAE}. Notice the independence of $\underline{g}$ on $E$. Moreover, it is clear that 
    \begin{align*}
        \det \underline{g} \neq 0 \quad \iff \quad D \neq 0 \ .
    \end{align*}
Observe that the coordinate vector fields $\frac{\partial}{\partial p}, \frac{\partial}{\partial q}$ span a $2$-dimensional totally isotropic subspace of $T (T^* \mathcal{B})$. Since $\operatorname{rank} T (T^* \mathcal{B}) = 4$, it follows that $g$ has signature $\left(2,2 \right)$. Applications of the symmetric form \eqref{eq: L-R def} in theoretical meteorology and the corresponding theory of fluid flows were discussed in \cite{VolRoul2015, Lychagin-Rubtsov-1983, RoubRoul2001}. Further information about the properties of $g$, as well as of the corresponding pullback metric $(df)^* g$, in the context of Ricci-flatness condition and Hessian structures was discussed in our previous work \cite{HronekSuchanek}.


\section{Incompressible fluid flows in 2D}


The motivation for us to investigate the stream equation arising from the kinematics of incompressible 2D fluid flows are the works of Banos, Delahaies, Roubtsov, and Roulstone \cite{VolRoul2015, phdthesis, roulstone2009kahler, Rubtsov_1997}. In these papers was demonstrated that modern differential geometric approach offers a new framework for understanding and studying semi-geostrophic theory and related models. In particular, the authors have shown the possibility to model coherent structures in large-scale atmosphero-ocean flows via a hierarchy of approximations to the Navier-Stokes equations and the related incompressible fluid flows models.

\subsection{Kinematics of incompressible fluids} 

Starting point is the system 
    \begin{align}\label{Navier-Stokes}
        \del_t \vec{v} + ( \vec{v} \cdot \vec{\nabla} )  \vec{v} & = \nu \Delta \vec{v} - \vec{\nabla} P \ , \\  \label{incompressibility of the fluid}
        \vec{\nabla} \cdot \vec{v} & = 0
    \end{align}
where $\vec{v}$ is the \textit{velocity field}, $\vec{v} = \left( a_t(x, y), b_t(x, y)  \right)$ ($t$ is understood as a parameter), $\nu \in \Rbb$ is the \textit{viscosity} (which we assume constant), $P = P_t \left(x, y) \right)$ is the \textit{pressure field}, $\cdot$ is the dot product, $\vec{\nabla} = \left(\del_x, \del_y \right) $, and the expression $\Delta \vec{v}$ refers to the vector Laplace operator $\Delta \vec{v} : = \vec{\nabla} \left( \vec{\nabla} \cdot \vec{v} \right) - \vec{\nabla} \times \left( \vec{\nabla} \times \vec{v} \right)$. Since we are working with $2D$ vectors, the vector product is understood as the composition 
    \begin{equation*}
        \begin{tikzcd}
            \Rbb^2 \arrow{r}{\iota} & \Rbb^3 \arrow{r}{\vec{\nabla} \times (-) } & \Rbb^3 \arrow{r}{\vec{\nabla} \times (-) } & \Rbb^3 \arrow{r}{\pr} & \Rbb^2 \ ,  
        \end{tikzcd}
    \end{equation*} 
where $\iota (x, y) =(x, y, 0)$ and $\pr (x, y, z) = (x,y)$. The Navier-Stokes equation \eqref{Navier-Stokes} and the \textit{incompressibility constraint} \eqref{incompressibility of the fluid} together (locally) describe the kinematics of 2D incompressible fluids flows.

\subsection{M-A equation for the stream function} 

Applying $\vec{\nabla} \cdot (-) $ on \eqref{Navier-Stokes} we obtain
    \begin{align}\label{intermediate equation 1}
         \del_t \vec{\nabla} \cdot \vec{v} + \vec{\nabla} \cdot \left( ( \vec{v} \cdot \vec{\nabla} )  \vec{v} \right)  = \nu \vec{\nabla} \cdot \left( \Delta \vec{v} \right) - \Delta P \ .
    \end{align}
Using the incompressibility constraint we have $\vec{\nabla} \cdot \del_t  \vec{v} = \del_t \vec{\nabla} \cdot \vec{v} = 0$ and     
    \begin{align*}
        \vec{\nabla} \cdot \left( ( \vec{v} \cdot \vec{\nabla} )  \vec{v} \right) & =  \left( \del_x a \right)^2 + \left( \del_y b \right)^2 + 2 \del_y a \del_x b \ , \\
            & = - 2 \del_x a \del_y b + 2 \del_y a \del_x b \ . 
    \end{align*}
Hence the left-hand side of the equation \eqref{intermediate equation 1} can be written as
    \begin{align*}
        \vec{\nabla} \cdot \del_t \vec{v} + \vec{\nabla} \cdot \left( ( \vec{v} \cdot \vec{\nabla} )  \vec{v} \right) = - 2 \det 
            \begin{pmatrix}
            \del_x a & \del_y a \\
            \del_x b & \del_y b 
            \end{pmatrix}
    \end{align*}
The term $\nu \vec{\nabla} \cdot \left( \Delta \vec{v} \right)$, appearing on the right-hand side of \eqref{intermediate equation 1}, vanish since
    \begin{align*}
        \vec{\nabla} \cdot \left( \Delta \vec{v} \right) =  \vec{\nabla} \cdot \left[ \vec{\nabla} \times \left( \vec{\nabla} \times \vec{v} \right) \right] = \del_{xxy} b - \del_{xyy}a + \del_{xyy} a - \del_{xxy}b = 0
    \end{align*}
The above computation leads to the following Poisson equation as a consequence of the system of equations \eqref{Navier-Stokes}, \eqref{incompressibility of the fluid}
    \begin{align}\label{intermediate equation 2}
         \Delta P = 2 \det 
            \begin{pmatrix}
            \del_x a & \del_y a \\
            \del_x b & \del_y b 
            \end{pmatrix} \ .
    \end{align}
The fluid incompressibility equation \eqref{incompressibility of the fluid} can be solved by $\vec{v} = \left( f_y, - f_x \right) $. This implies that \eqref{intermediate equation 2} can be written as 
    \begin{align}\label{Monge-Ampère equation for the stream function}
        \det \operatorname{Hess} (f) = \frac{1}{2} \Delta P \ . 
    \end{align}
Instead of viewing the above equation as a Poisson equation for $P$, one can instead interprete it as a Monge-Ampère equation for a function $f$ of two variables, $f = f(x,y)$, called the \textit{stream function}. This equation was studied by Roulston, Banos, Gibbon, and Roubtsov, in the context of semigeostropic models \cite{roulstone2009kahler, coherent_structures_Navier-Stokes}, where the relationship between $\sgn \Delta P$ and the balance between the rate of strain and the enstrophy of the flow was discussed. In order to proceed further with our geometric approach, we assume $\Delta P \neq 0$.

\subsection{M-A geometry of incompressible fluid flows}\label{section: M-A geometry of incompressible fluid flows}

We now move toward the description of the M-A equation \eqref{Monge-Ampère equation for the stream function} via a pair of $2$-forms $(\Omega, \alpha)$ forming a normalized M-A structure. The corresponding pair of differential $2$-forms is given by the canonical symplectic form \eqref{eq: symplectic form} and 
    \begin{align}\label{Monge-Ampere structure}
        \alpha = \sqrt{ \frac{ 2 }{ |\Delta P| } } \D p \wedge \D q  - \frac{ \Delta P }{ \sqrt{ |2 \Delta P| } } \D x \wedge \D y \ ,
    \end{align}
where $\Delta P \neq 0$ is viewed as a function on $T^*\mathcal{B}$, which coincides with $\Delta P$ from equation \eqref{Navier-Stokes} (see remark \ref{remark: notation of coefficients}). The pair clearly satisfy $ \alpha \wedge \Omega = 0$, making it an M-A structure. This is in concordance with general situation described in section \ref{section: M-A theory}. Moreover,
    \begin{align*}
        (\D f)^*\alpha 
            & = \left( \sqrt{ \frac{ 2 }{ |\Delta P| } } ( f_{xx} f_{yy} - { f_{xy} }^2 ) - \frac{ \Delta P }{ \sqrt{ |2\Delta P| } }  \right) \D x \wedge \D y  \ .
    \end{align*}
This implies that $ (\D f)^*\alpha = 0$ amounts to $ \det \operatorname{Hess} (f) = \frac{\Delta P}{2}$, and thus the above M-A structure corresponds to the M-A equation \eqref{Monge-Ampère equation for the stream function}. From \eqref{Monge-Ampere structure} we read the coefficients of $\alpha$
    \begin{align*}
        A = 0 \ , B = 0 \ , C = 0 \ , D = \sqrt{ \frac{ 2 }{ |\Delta P| } } \ , E = \frac{ - \Delta P }{ \sqrt{ |2 \Delta P| } } \ .
    \end{align*}
By direct computation, or with the help of \eqref{Pfaffian in canonical coordinates}, we see that the Pfaffian is 
    \begin{align*}
        \Pf (\alpha) = \sgn \Delta P \ .
    \end{align*}
Hence the M-A structure is normalized in the sense of definition \ref{def: Pfaffian, elliptic/hyperbolic/normalized structure}. Consequently, the matrix of endomorphism $\rho$ is
    \begin{align}\label{matrix of rho for det hess = 1}
        \underline{\rho} = { |2 \Delta P| }^{ - \frac{1}{2} }
            \begin{pmatrix}
            0 & 0 & 0 & - 2 \\
            0 & 0 & 2 & 0 \\
            0 &  - \Delta P & 0 & 0 \\
             \Delta P & 0 & 0 & 0 
            \end{pmatrix} \ . 
    \end{align}
Depending on the $\sgn \Delta P $, we obtain the following cases
    \begin{align}\label{eq: square of rho tensor for det hess = 1}
        \rho^2 = 
        \begin{cases}
        - \operatorname{Id}_{ T \left( T^*\mathcal{B} \right) } & \ \ \ \text{for $\Delta P > 0$} \ ,  \\
        \ \ \  \operatorname{Id}_{ T \left( T^*\mathcal{B} \right) } & \ \ \ \text{for $\Delta P < 0$} \ , 
        \end{cases}
    \end{align}


Let us proceed with the question of integrability. By virtue of proposition \eqref{Lychagin-Rubtsov theorem}, the integrability of $\rho$ is equivalent to $\D  \left( { | \Pf (\alpha) | }^{ - \frac{1}{2} } \alpha \right) = 0$, where $\alpha$ is given by \eqref{Monge-Ampere structure}. Recall that the pressure is assumed to be a function of $x,y$ only, $P = P(x,y)$, hence we have
    \begin{align*}
        \D  \left( { | \Pf (\alpha) | }^{ -\frac{1}{2} } \alpha \right) = \left( -\frac{ \sqrt{2} }{ 2 } |\Delta P|^{-\frac{ 3 }{ 2 } } \sgn \Delta P \right) \D \left( \Delta P \right) \wedge \D p \wedge \D q \ . 
    \end{align*}
Since $\Delta P$ is a $0$-form (hence cannot be closed), the last equation implies
    \begin{align*}
        \rho \text{ is integrable} \quad \iff \quad \Delta P \text{ is constant.}
    \end{align*}

\hfill

\noindent \textbf{Indefinite metric field.} To finish this section, we give the matrix description of the symmetric bilinear form \eqref{eq: L-R def} corresponding to \eqref{Monge-Ampère equation for the stream function}. The matrix of $g$ for a generic $\alpha$ is displayed in \eqref{eq: L-R symmetric tensor}, which in our case becomes the following pseudo-metric field with split signature $(2,2)$
    \begin{align}\label{eq: Lychagin-Rubtsov metric for det hess = 1}
        \underline{g} = \sqrt{ \frac{2}{ |\Delta P| } }
            \begin{pmatrix}
            0 & \mathbb{1} \\
            \mathbb{1} & 0
            \end{pmatrix} \ .
    \end{align}



\section{Generalized geometry and M-A theory}


In this section we want to describe generalized structures associated with the equation for the stream function \eqref{Monge-Ampère equation for the stream function}. For the sake of completeness, we recall some key definitions and objects from generalized geometry. To avoid any confusion, we formulate these notions for a general smooth manifold $\mathcal{M}$, and then choose  $\mathcal{M} = T^*\mathcal{B}$ to investigate the generalized geometry of $\det \operatorname{Hess} ( f ) = 1$.

\subsection{Generalized (almost) structures} 

A \textit{generalized tangent bundle} over $\mathcal{M}$ is the vector bundle 
    \begin{equation}\label{generalized tangent bundle}
        \begin{tikzcd}
             \mathbb{T}\mathcal{M} : = T\mathcal{M} \oplus T^*\mathcal{M} \arrow{r}{\pi} &  \mathcal{M} \ ,  
        \end{tikzcd}
    \end{equation} 
with the bundle projection $\pi$ defined by the the composition 
    \begin{equation*}
        \begin{tikzcd}
             \mathbb{T}\mathcal{M} \arrow{r}{} &  T\mathcal{M} \arrow{r}{} & \mathcal{M} \ ,  
        \end{tikzcd}
    \end{equation*} 
where the left map is the projection on the first factor of the Whitney sum, and the second map is the tangent bundle projection on $\mathcal{M}$. The pairing between vector fields and $1$-forms endows $\mathbb{T}\mathcal{M}$ with a non-degenerate, symmetric, bilinear form $\eta$
    \begin{align}\label{def: inner product on the generalized tangent bundle}
        \eta \bigl( \left( X, \xi \right),  \left( Y, \zeta \right) \bigr) : = \frac{1}{2} \left( \xi (Y) + \zeta (X) \right) \ ,
    \end{align}
which defines on $\mathbb{T}\mathcal{M}$ a pseudo-Riemannian metric of signature $(n,n)$. If we choose a coordinate system $(q^\mu)$ on $\mathcal{M}$, then the corresponding coordinate vector fields and $1$-forms define a local basis $\left( (\partial_{q^\mu}, 0 ) , ( 0, dq^\mu ) \right) $ of $\mathbb{T} \mathcal{M}$. The matrix representation of $\eta$ in this basis is 
    \begin{align*}
        \eta = 
            \begin{pmatrix}
            0 & \mathbb{1} \\
            \mathbb{1} & 0
            \end{pmatrix} \ .
    \end{align*}
    

\textit{A generalized almost complex structure $\mathbb{J}$ over $\mathcal{M}$} is a bundle map $\mathbb{J} \colon \mathbb{T}\mathcal{M} \to \mathbb{T}\mathcal{M}$ such that $\mathbb{J}^2 = - \operatorname{Id}_{ \mathbb{T}\mathcal{M}}$ and for all $ \left( X, \eta \right),  \left( Y, \theta \right) \in \mathbb{T}\mathcal{M}$
    \begin{align}\label{eq: J acting on eta}
         \langle \mathbb{J} \left( X, \eta \right),  \mathbb{J} \left( Y, \theta \right) \rangle = \langle  \left( X, \eta \right), \left( Y, \theta \right) \rangle
    \end{align}
where $\eta$ is the natural inner product \eqref{def: inner product on the generalized tangent bundle} \cite{Gualtierri2011, Hitchin-gen-calabi-yau}.


\begin{example}\label{example: Hitchin pairs and generalized complex structure}
Let $(\Omega, \alpha)$ be a non-degenerate M-A structure with $\alpha$ closed (i.e. a pair of symplectic structures). Consider a $(1,1)$-tensor $A_\alpha : =  \pi_\Omega^\# \circ \alpha_\#$ (c.f. definition \eqref{def: rho_alpha tensor}). Then $(\Omega, A_\alpha)$ defines \textit{a Hitchin pair of $2$-forms} in the sense of Crainic \cite{Crainic2004GeneralizedCS, Kosmann-Schwarzbach2010} since 
    \begin{align*}
        \Omega_\# A_\alpha = A_\alpha^* \Omega_\# \ .  
    \end{align*}
Using the notion of Hitchin pairs, B. Banos \cite{BANOS2007841} showed in that every 2D non-degenerate M-A structure satisfying for appropriate $\phi \in C^\infty \left( T^* \mathcal{B} \right) $ the \textit{divergence condition}
    \begin{align*}
        \D (\alpha + \phi \Omega ) = 0 \ , 
    \end{align*}
yields an integrable generalized almost structure $\mathbb{J}_\alpha$ given as follows
    \begin{align*}
        \mathbb{J}_\alpha = 
            \begin{pmatrix}
            A_\alpha & \pi_\Omega^\# \\
            - \left( \Omega_\# + \Omega_\# A_\alpha^2 \right) & -A_\alpha^*
            \end{pmatrix}
    \end{align*}
\end{example}


The result described in example \ref{example: Hitchin pairs and generalized complex structure} was a key motivation for our further investigations of other possiblities of constructing generalized (almost) geometries from Monge-Ampère equations and the corresponding Monge-Ampère structures. Starting with the M-A equation for the stream function \eqref{Monge-Ampère equation for the stream function} with specifically chosen $\Delta P$, we will show that the family of generalized structures constructed with the help of M-A theory can be significantly enlarged. In order to state our result, we need to extend the notion of generalized (almost) complex structure by the following definition. 


\begin{definition}[\cite{hu2019commuting}] \label{def: generalized almost structure}
\textit{A generalized almost structure $\mathbb{J}$ over $\mathcal{M}$} is a bundle map $\mathbb{J} \colon \mathbb{T}\mathcal{M} \to \mathbb{T}\mathcal{M}$ such that
    \begin{align*}
        \mathbb{J}^2 =  \gamma_1 \operatorname{id}_{\mathbb{T}M} \ , && \mathbb{J}^\bullet \eta = \gamma_2 \eta \ ,
    \end{align*}
where $\gamma_1, \gamma_2 \in \{ -1, 1\}$ and $\mathbb{J}^\bullet\eta$ is understood in the sense of \eqref{eq: J acting on eta}. Table \ref{tab: type of generalized structure} describes the four possible choice of constants $\gamma_i$.
    \begin{table}[h!]
        \begin{center}
        \begin{tabular}{ | >{\centering\arraybackslash}X p{2cm} || >{\centering\arraybackslash}X p{2cm}| >{\centering\arraybackslash}X p{2cm} |  >{\centering\arraybackslash}X p{2cm} |  >{\centering\arraybackslash}X p{2cm} |} 
            \hline 
            $(\gamma_1, \gamma_2)$ & $(1, 1)$ & $(1, -1)$ & $(-1, 1)$ & $(-1, -1)$ \\ 
            \hline
            type of $\mathbb{J}$  &  GaP & GaPC & GaC & GaAC  \\
            \hline
        \end{tabular}
        \caption{\label{tab: type of generalized structure} Type of a generalized almost structure depending on $(\gamma_1, \gamma_2)$.}
        \end{center}
    \end{table}
The abbreviations stand for \textit{generalized almost product} (GaP), \textit{generalized almost complex} (GaC), \textit{generalized almost para-complex} (GaPC), and \textit{generalized almost anti-complex} (GaAC) structure. A generalized structure is called \textit{non-degenerate}, if its eigenbundles are isomorphic to $T \mathcal{M}$ (if $\mathbb{J}^2 = \operatorname{Id}_{ \mathbb{T} \mathcal{M} }$), or to $T \mathcal{M} \otimes \mathbb{C}$ (if $\mathbb{J}^2 = - \operatorname{Id}_{ \mathbb{T} \mathcal{M} }$)
\end{definition}


\begin{example}\label{diagonal and antidiag. g. almost structures}
A family of examples of generalized almost structures comes from  non-degenerate tensor fields of rank $2$.  Let $J \in \operatorname{Aut} \left( T \mathcal{M} \right) $ be an almost complex structure, $\alpha \in \Omega^2 \left( \mathcal{M} \right)$ a non-degenerate $2$-form, and $g \in S^2 \left( \mathcal{M} \right)$ a~non-degenerate symmetric bilinear form. We have seen in the previous sections that M-A theory is a natural source of the above three object. Let $\epsilon \in \{ -1, 1 \}$ and consider
    \begin{align}\label{diagonal and antidiagonal structures}
        \mathbb{J}  & = 
            \begin{pmatrix}
            J & 0 \\
            0 & \epsilon J^*
            \end{pmatrix} \ , &
        \mathbb{K} & = 
            \begin{pmatrix}
            0 & \pi_\alpha^\# \\
            \epsilon \alpha_\# & 0
            \end{pmatrix} \ , & 
        \mathbb{G} & = 
            \begin{pmatrix}
            0 & \pi_g^\# \\
            \epsilon g_\# & 0
            \end{pmatrix} \ . \
    \end{align}
Then $\mathbb{J}$ is a GaAC stucture for $\epsilon = 1$, and a GaC structure for $\epsilon = -1$, while $\mathbb{K}$ is a GaPC structure for $\epsilon = 1$, and a GaC structure for $\epsilon = -1$. For $\mathbb{G}$ we have a GaP in the case $\epsilon = 1$ and a GaAC if $\epsilon = -1$. 

We see that starting with non-degenerate rank $2$ tensor fields, one can construct isotropic, as well as non-isotropic structures. We will end this example with the demonstration of non-isotropy of generalized almost product structure. Take $\mathbb{G}$ with $\epsilon = 1$. Then the $\pm1$-eigenbundles of $\mathbb{G}$ are
    \begin{align*}
        E_\pm = \{ (X, \pm g_\# X ) | X \in \Gamma \left( T \mathcal{M} \right) \} \ .
    \end{align*}
Indeed, for $x_+ = (X, g_\# X ) \in E_+$, and $y_- = (Y, -g_\# Y ) \in E_-$, it holds $\mathbb{G} x_+ = (X, g_\# X ) = x_+$, and $\mathbb{G} y = (-Y, g_\# Y ) = -y_-$. Now suppose $x_+,y_+ \in E_+$. Then 
    \begin{align*}
        \eta \left( x_+ , y_+ \right) = \frac{1}{2}\left( (g_\# X) Y + (g_\# Y) X \right) = g(X,Y) \ .
    \end{align*}
Obviously there are $X,Y \in \Gamma \left( T \mathcal{M} \right)$ such that $g(X,Y) \neq 0$, thus $E_+$ is not totally isotropic, and hence $\mathbb{G}$ is a non-isotropic structure. Nontheless, we can still speak about (non-)degeneracy of a non-isotropic structures. 
\end{example}

\hfill

\noindent \textbf{The Courant bracket.} The space of sections $\Gamma \left(\mathbb{T}\mathcal{M} \right) = \Gamma \left( T\mathcal{M} \right) \oplus \Gamma \left( T^*\mathcal{M} \right)$ is equipped with the antisymmetric \textit{Courant bracket} $[ -,- ]_C$ \cite{Courant-DiracBracket}
    \begin{align}\label{def: Courant bracket}
       [ \left( X, \xi \right),  \left( Y, \zeta \right) ]_C : = \left( [X,Y] , \mathcal{L}_X \zeta -  \mathcal{L}_Y \xi - \frac{1}{2} \D ( X \iprod \zeta -  Y \iprod \xi ) \right) \ ,
    \end{align}
where $[X,Y]$ is the Lie bracket and $\mathcal{L}$ is the Lie derivative. Note that the Courant bracket, which is an extension of the Lie bracket for vector fields, does not satisfy the Jacobi identity. More importantly for our considerations, the Courant bracket can be used to define integrability of \textit{isotropic} structures on $\mathbb{T}\mathcal{M}$.

\subsection{Isotropy, Dirac structures, and integrability}

Every generalized almost structure comes together with the corresponding subbundles $E_{+} , E_{-}$, which are $\pm 1$-eigenbundles if $\mathbb{J}^2 = \operatorname{id}_{\mathbb{T}M}$, and $\pm i$-eigenbundles if $\mathbb{J}^2 = -\operatorname{Id}_{\mathbb{T}M}$. 

\begin{definition}
A subbundle $E \subset \mathbb{T}M$ is called \textit{totally isotropic} (w.r.t. the inner product $\eta$), if for all $x,y \in E: \eta(x,y) = 0$. Totally isotropic $E$ is called an \textit{almost Dirac structure on $M$} if $\operatorname{rank}E = \operatorname{rank}TM$. A \textit{Dirac structure on $M$} is an almost Dirac structure $E$ such that $[E, E]_C \subset E$.
\end{definition}

\begin{remark}
A Dirac structure on $M$ can be equivalently defined as a totally isotropic subbundle $E \subset \mathbb{T}M$ of maximal rank, which is involutive with respect to the Courant bracket.      
\end{remark}

The four possible generalized almost structures determined by definition \ref{def: generalized almost structure} can be divided into two subsets depending on whether the eigenbundles $E_\pm$ are almost Dirac structures or not. 

\begin{definition}
Let $\mathbb{J}$ be a generalized almost structure. If the eigenbundles $E_\pm$ are almost Dirac structures, then $\mathbb{J}$ is called \textit{isotropic} (with respect to $\eta$). Otherwise $\mathbb{J}$ is called \textit{non-isotropic}.
\end{definition}

The involutivity condition of almost Dirac structures $E_\pm$ can serve as a~definition of integrability only for isotropic $\mathbb{J}$ (see remark \ref{non-isotropy and integrability} below).

\begin{definition}\label{def: integrability of a generalized structure}
An isotropic generalized almost structure $\mathbb{J} \in \operatorname{End} (\mathbb{T}M) $ is called \textit{integrable} if the corresponding eigenbundles $E_+, E_-$ are Dirac structures. An integrable generalized almost structure is called a \textit{generalized structure}. 
\end{definition}

Let $\mathbb{J} \in \operatorname{End} ( \mathbb{T}M) $ be an isotropic generalized almost structure. Then the torsion of $\mathbb{J}$ is defined for all $x,y \in \Gamma ( \mathbb{T}M )$ by 
    \begin{align}\label{def: torsion of A}
        N_{\mathbb{J}} (x,y) : = [\mathbb{J}x, \mathbb{J}y]_C + \mathbb{J}^2 [x,y]_C - \mathbb{J} \left( [ \mathbb{J}x, y ]_C + [x, \mathbb{J}y]_C \right) \ .
    \end{align}
This $(1,2)$-tensor $N_{\mathbb{J}} \colon \Gamma ( \mathbb{T}M ) \otimes \Gamma ( \mathbb{T}M ) \to \Gamma ( \mathbb{T}M ) $ is called (generalized) \textit{Nijenhuis tensor}. An isotropic $\mathbb{A}$ is integrable if and only if the coresponding Nijenhuis tensor vanishes: $N_{\mathbb{A}} (x,y) = 0$ for all $x,y \in \Gamma (\mathbb{T}M)$ \cite{Crainic2004GeneralizedCS}. 

\begin{remark}\label{non-isotropy and integrability}
If a generalized almost structure $\mathbb{J}$ is non-isotropic, then the eigenbundles $E_\pm$ are not totally isotropic (see example \ref{diagonal and antidiag. g. almost structures}) and the Courant bracket is not well-defined on them. Moreover, for non-isotropic structures, the torsion \eqref{def: torsion of A} is not a tensor. As a consequence, the notion of integrability, which is usually given either by the condition of vanishing Nijenhuis tensor, or by definition \ref{def: integrability of a generalized structure}, cannot be applied for non-isotropic structures. Nontheless, there are other means of defining integrability of non-isotropic structures. For example, the authors of \cite{hu2019commuting} defined \textit{weak integrablity} using the notion of generalized Bismut connection \cite{Gualtieri2007BranesOP}.
\end{remark}

\section{Generalized geometry of 2D incompressible fluids}

We have seen in section \ref{section: M-A geometry of incompressible fluid flows} that, by virtue of the proposition \ref{Lychagin-Rubtsov theorem}, the endomorphism $\rho$ is an integrable structure, if and only if $\Delta P$ is constant, nowhere vanishing function. This also implies $\D \alpha = 0$. In the following proposition, we will focus on the case $\Delta P > 0$ and, for the convenience choose $\Delta P = 2$. In this situation, the equation for the stream function \eqref{Monge-Ampère equation for the stream function} becomes
    \begin{align}\label{Monge-Ampère equation for the stream function with Delta P = 2}
        \det \operatorname{Hess} (f) = 1 \ .
    \end{align}
The block matrices of $\rho$ and $\alpha_\#$ are 
    $\underline{\rho} = 
        \begin{pmatrix}
        0 & A \\
        A & 0
        \end{pmatrix}$ and 
    $\underline{\alpha} = 
        \begin{pmatrix}
        A & 0 \\
        0 & -A
        \end{pmatrix}$, where 
    $A = 
        \begin{pmatrix}
        0 & -1 \\ 
        1 & 0\end{pmatrix}$.
We have $ \underline{\rho}^2 = \underline{\alpha}^2 = - \mathbb{1}$, and $\underline{\rho}^T = - \underline{\rho}$. Notice that this relations change when $\Delta P = -2$. Of course, in all cases we have $\underline{\alpha}^T = - \underline{\alpha}$, since $\alpha$ is a $2$-form.


\begin{proposition}
Let $(\Omega, \alpha)$ be the normalized M-A structure determined by the equation \eqref{Monge-Ampère equation for the stream function} with $\Delta P = 2$. Let $\rho$ be the corresponding endomorphism \eqref{def: rho_alpha tensor}. Then the following $\mathbb{J}_i \colon \mathbb{T} \mathcal{B} \to \mathbb{T} \mathcal{B}$
    \begin{align}\label{triple of gen. structures 1}
        \mathbb{J}_1 = 
            \begin{pmatrix}
            \underline{\rho} & 0 \\
            0 & \epsilon_1 \underline{\rho}
            \end{pmatrix} \ ,
        && \mathbb{J}_2 = 
            \begin{pmatrix}
            0 & \underline{\alpha} \\
            \epsilon_2 \underline{\alpha} & 0
            \end{pmatrix} \ , 
        && \mathbb{J}_3 =
            \begin{pmatrix}
            0 & \underline{\Omega} \\
            \epsilon_3 \underline{\Omega} & 0
            \end{pmatrix}  \ ,
    \end{align}
where $\epsilon_i \in \{ -1, 1\}$, are generalized almost structures structures. 
    \begin{enumerate}
        \item If $\epsilon_1 = 1$, then $\mathbb{J}_1$ is isotropic and integrable. If $\epsilon_1 = -1$, then $\mathbb{J}_1$ is non-isotropic.
        \item  The structures $\mathbb{J}_2, \mathbb{J}_3$ are isotropic and integrable.
    \end{enumerate}
The types of these structures depending on $\epsilon_i$ are described in table \ref{tab: type of generalized structure for det hess = 1}. 
    \begin{table}[h!]
        \begin{center}
        \begin{tabular}{ | >{\centering\arraybackslash}X p{2cm} | >{\centering\arraybackslash}X p{2cm}| >{\centering\arraybackslash}X p{2cm} |  >{\centering\arraybackslash}X p{2cm} | } 
            \hline 
            $\sgn{\epsilon_i}$    & $\mathbb{J}_1$ & $\mathbb{J}_2 $ & $\mathbb{J}_3$\\ 
            \hline
            $+$  &  GC & GC & GC \\
            \hline
            $-$  & GaAC & GPC & GPC \\
            \hline
        \end{tabular}
        \caption{\label{tab: type of generalized structure for det hess = 1} Generalized structure associated with $\det \operatorname{Hess} (f) = 1$.}
        \end{center}
    \end{table}
\end{proposition}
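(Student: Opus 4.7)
The plan is to split the argument into an algebraic classification (types, isotropy) and a geometric one (integrability), leveraging the three facts established just before the statement: $\underline{\rho}^2 = \underline{\alpha}^2 = -\mathbb{1}$ and $\underline{\rho}^T = -\underline{\rho}$, $\underline{\alpha}^T = -\underline{\alpha}$. In block form, the canonical symplectic form satisfies the analogous identities $\underline{\Omega}^2 = -\mathbb{1}$ and $\underline{\Omega}^T = -\underline{\Omega}$, which I will note at the start.

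First I would read off $\mathbb{J}_i^2$ directly from the block products: $\mathbb{J}_1^2 = \mathrm{diag}(\underline{\rho}^2, \epsilon_1^2 \underline{\rho}^2) = -\mathbb{1}$, and $\mathbb{J}_2^2 = \epsilon_2 \underline{\alpha}^2\, \mathbb{1} = -\epsilon_2 \mathbb{1}$, and likewise $\mathbb{J}_3^2 = -\epsilon_3 \mathbb{1}$. Then I would compute $\mathbb{J}_i^T \eta \mathbb{J}_i$ where $\eta = \bigl(\begin{smallmatrix} 0 & \mathbb{1} \\ \mathbb{1} & 0 \end{smallmatrix}\bigr)$. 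Using $\underline{\rho}^T \underline{\rho} = -\underline{\rho}^2 = \mathbb{1}$ (and similarly for $\underline{\alpha}, \underline{\Omega}$), the computation gives $\mathbb{J}_1^T \eta \mathbb{J}_1 = \epsilon_1 \eta$, $\mathbb{J}_2^T \eta \mathbb{J}_2 = \epsilon_2 \eta$, $\mathbb{J}_3^T \eta \mathbb{J}_3 = \epsilon_3 \eta$. Matching the pairs $(\gamma_1, \gamma_2)$ against Table~\ref{tab: type of generalized structure} yields exactly Table~\ref{tab: type of generalized structure for det hess = 1}.

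Next, for isotropy, I would use the standard observation that for $u, v$ in a single $\pm i$-eigenbundle of $\mathbb{J}$ with $\mathbb{J}^2 = -\mathrm{Id}$ one has $\eta(\mathbb{J}u, \mathbb{J}v) = -\eta(u,v)$, while for a single $\pm 1$-eigenbundle with $\mathbb{J}^2 = \mathrm{Id}$ one has $\eta(\mathbb{J}u, \mathbb{J}v) = \eta(u,v)$. Comparing with $\mathbb{J}^\bullet \eta = \gamma_2 \eta$ forces $\eta|_{E_{\pm}} = 0$ exactly in the GaC and GaPC cases. Thus $\mathbb{J}_1$ is isotropic iff $\epsilon_1 = +1$ (GaC) and non-isotropic for $\epsilon_1 = -1$ (GaAC), while $\mathbb{J}_2, \mathbb{J}_3$ are isotropic for both choices of sign.

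Finally, for integrability of the isotropic structures, I would invoke the fact that $\Delta P = 2$ is constant: proposition~\ref{Lychagin-Rubtsov theorem} then yields integrability of $\rho$, and the direct computation performed in section~\ref{section: M-A geometry of incompressible fluid flows} gives $\mathrm{d}\alpha = 0$; the canonical form $\Omega$ is of course always closed. For $\mathbb{J}_1$ with $\epsilon_1 = +1$, the block-diagonal form is the standard GaC lift of the integrable almost complex structure $\rho$ (cf.\ example~\ref{diagonal and antidiag. g. almost structures}), whose generalized Nijenhuis tensor reduces on eigenbundles to the ordinary Nijenhuis tensor of $\rho$, hence vanishes. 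For $\mathbb{J}_2$ and $\mathbb{J}_3$, the off-diagonal form is the standard symplectic/para-symplectic lift, for which the generalized Nijenhuis tensor of~\eqref{def: torsion of A} reduces, after cancellation of the Lie-derivative and contraction terms via the interior-product identity, to the exterior derivative of the $2$-form in the anti-diagonal block. Since $\mathrm{d}\alpha = 0 = \mathrm{d}\Omega$, both vanish. The main obstacle is the careful bookkeeping in this last reduction, particularly for the GaPC ($\epsilon_i = -1$) cases where the real $\pm 1$-eigenbundles must be handled without borrowing the complex-coefficient shortcut available in the GaC case; this is where writing a local frame for $E_\pm$ and unpacking~\eqref{def: Courant bracket} is unavoidable.
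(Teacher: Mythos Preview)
Your proposal is correct and follows essentially the same route as the paper's proof: identify the $\mathbb{J}_i$ with the model structures of example~\ref{diagonal and antidiag. g. almost structures}, read off $(\gamma_1,\gamma_2)$ from the block computations of $\mathbb{J}_i^2$ and $\mathbb{J}_i^T\eta\,\mathbb{J}_i$, and then invoke integrability of $\rho$ (via constant $\Delta P$ and proposition~\ref{Lychagin-Rubtsov theorem}) together with closedness of $\alpha$ and $\Omega$ to conclude integrability of the isotropic structures. The paper's argument is terser, simply asserting the reductions you spell out; your added detail on the general isotropy criterion (GaC/GaPC versus GaAC/GaP) and on how the generalized Nijenhuis tensor collapses to the classical Nijenhuis tensor of $\rho$ or to $\mathrm{d}\alpha$, $\mathrm{d}\Omega$ is a welcome expansion rather than a departure.
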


\begin{proof}
The choice of the triple $(\mathbb{J}_i)$ corresponds to the example \eqref{diagonal and antidiag. g. almost structures} with respect to the M-A equation $\det \operatorname{Hess} (f) = 1$. This implies 
    \begin{align}\label{eq: (J_i)^2}
        (\mathbb{J}_1)^2 = -\mathbb{1} \ , &&  (\mathbb{J}_2)^2 = \epsilon_2 (-\mathbb{1}) \ , &&  (\mathbb{J}_3)^2 = \epsilon_3 (-\mathbb{1}) \ , 
    \end{align}
and 
    \begin{align}
        (\mathbb{J}_1)^* \eta = \epsilon_1 \eta \ , && (\mathbb{J}_2)^* \eta = \epsilon_2 \eta \ , && (\mathbb{J}_3)^* \eta = \epsilon_3 \eta \ ,
    \end{align}
where $(\mathbb{J}_i)^* \underline{\eta} = {\mathbb{J}_i}^T \underline{\eta} \mathbb{J}_i$. Due to our choice of $\Delta P$, $\rho$ is an integrable almost complex structure (a complex structure). If $\epsilon_1 = 1$, this implies that $\mathbb{J}_1$ is a generalized complex structure. If $\epsilon_1 = -1$, then $\mathbb{J}_1$ is a generalized almost anti-complex structure, which is non-isotropic. Furthermore, both $\alpha$ and $\Omega$ are closed. Thus for all choices of $\epsilon_2, \epsilon_3$, the structures $\mathbb{J}_2$ and $\mathbb{J}_3$ are integrable. Therefore $\mathbb{J}_2$ and $\mathbb{J}_3$ are (independently) either GC or GPC structures. 
\end{proof}

\subsection{Structures generated by anticommutative triples}

Let $\{  \mathbb{J}_1, \mathbb{J}_2 , \mathbb{J}_3 \}$ be a triple of pair-wise anticommuting generalized almost structures, given by \eqref{triple of gen. structures 1}. The anticommutators are 
    \begin{align*}
        \{\mathbb{J}_1, \mathbb{J}_2\} & =  
            \begin{pmatrix}
            0 & -( \epsilon_1 - 1 ) \underline{\Omega} \\
            \epsilon_2 (\epsilon_1 - 1 ) \underline{\Omega} & 0
            \end{pmatrix}   \ , \\
        \{\mathbb{J}_2, \mathbb{J}_3\} & =   
            \begin{pmatrix}
            -(\epsilon_2 - \epsilon_3 ) \underline{\rho} & 0 \\
            0 &  (\epsilon_2 - \epsilon_3 ) \underline{\rho} 
            \end{pmatrix}  \ , \\
        \{\mathbb{J}_1, \mathbb{J}_3\} & =  
            \begin{pmatrix}
            0 & -(1 - \epsilon_1) \underline{\alpha} \\
             \epsilon_3 (1 - \epsilon_1) \underline{\alpha} & 0
            \end{pmatrix}   \ ,
    \end{align*}
and we observe 
    \begin{align*}
        \{\mathbb{J}_i, \mathbb{J}_j\} & = 0 \ \forall i,j \quad \iff \quad \epsilon_1 = 1,  \epsilon_2 = \epsilon_3 \ .
    \end{align*}
This implies that there are exactly two possibilities in which the pair-wise anticommutivity occurs. In both cases $\mathbb{J}_1$ is a GC structure, and either  $\mathbb{J}_2, \mathbb{J}_3$ are a pair of GC structures, or a pair of GPC structures. The first case consisting of a triple of GC structures has some further properties which are described in the proposition \ref{prop: the 2-sphere of almost GC structures} and the subsequent corollary. In both cases, any of the two structures generate the third one and $\mathbb{J}_1$ is a fixed GC structure.

\begin{proposition}\label{prop: the 2-sphere of almost GC structures}
Let $\{ \mathbb{J}_1, \mathbb{J}_2, \mathbb{J}_3 \}$ be given by \eqref{triple of gen. structures 1}. Then the structures pair-wise anticommute, 
    \begin{align*}
         \{ \mathbb{J}_i, \mathbb{J}_{i+1} \} = 0 \ , \  i = 1,2,3 \ ,
    \end{align*}
if, and only if, $\epsilon_1 = \epsilon_2 = \epsilon_3 = 1$, or, $\epsilon_1 = - \epsilon_2 = - \epsilon_3 = 1$. 
    \begin{enumerate}
    \item Let $\epsilon_1 = \epsilon_2 = \epsilon_3 = 1$. Then each $\mathbb{J}_i$ is a GC structure. Moreover, there is a $2$-sphere of GaC structures in $\operatorname{span}_{\mathbb{R}} \{ \mathbb{J}_1, \mathbb{J}_2, \mathbb{J}_3 \}$.
    \item Let $\epsilon_1 = - \epsilon_2 = - \epsilon_3 = 1$. Then $\mathbb{J}_1$ is a GC structure and $\mathbb{J}_2, \mathbb{J}_3$ are GPC structures. Moreover, there is a one-sheeted hyperboloid of GaPC structures and a two-sheeted hyperboloid of GaC structures in $\operatorname{span}_{\mathbb{R}} \{ \mathbb{J}_1, \mathbb{J}_2, \mathbb{J}_3 \}$.
    \end{enumerate}
In both cases we have
    \begin{align}\label{eq: product properties of the pair-wise anticommutative triple}
         \mathbb{J}_i \mathbb{J}_{i+1} = \mathbb{J}_{i+2} \mod{3} \ , \  i = 1,2,3 \ .
    \end{align}
\end{proposition}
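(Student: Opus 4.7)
The plan is to reduce everything to block-matrix manipulations of $\underline{\rho}, \underline{\alpha}, \underline{\Omega}$ and the $2\times 2$ matrix $A$ satisfying $A^2 = -\mathbb{1}_2$. For the iff, I would invoke the three anticommutator formulas displayed just before the proposition: each is a scalar multiple of either $(1-\epsilon_1)$ or $(\epsilon_2-\epsilon_3)$, so simultaneous vanishing forces $\epsilon_1 = 1$ and $\epsilon_2 = \epsilon_3$, yielding precisely the two cases listed. The type of each $\mathbb{J}_i$ then follows by direct reference to table \ref{tab: type of generalized structure for det hess = 1}: in case 1 all three are GC; in case 2, $\mathbb{J}_1$ is GC and $\mathbb{J}_2, \mathbb{J}_3$ are GPC.

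For the quadric statements, I would take a generic $\mathbb{J} = a\mathbb{J}_1 + b\mathbb{J}_2 + c\mathbb{J}_3$ in the span. By pair-wise anticommutativity the cross terms in $\mathbb{J}^2$ vanish, giving
\begin{align*}
\mathbb{J}^2 = -\bigl(a^2 + \epsilon_2 b^2 + \epsilon_3 c^2\bigr)\,\mathbb{1}.
\end{align*}
For the $\eta$-compatibility, the key observation is that each $\mathbb{J}_i$ satisfies $\mathbb{J}_i^T \eta = -\eta\,\mathbb{J}_i$ (consistent with $\mathbb{J}_i^T \eta \mathbb{J}_i = \epsilon_i \eta$ together with $\mathbb{J}_i^2 = -\epsilon_i \mathbb{1}$, using $\epsilon_1 = 1$). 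From this,
\begin{align*}
\mathbb{J}_i^T \eta \mathbb{J}_j + \mathbb{J}_j^T \eta \mathbb{J}_i = -\eta\{\mathbb{J}_i,\mathbb{J}_j\},
\end{align*}
which vanishes under pair-wise anticommutativity. Hence the cross terms in $\mathbb{J}^T \eta \mathbb{J}$ also vanish and
\begin{align*}
\mathbb{J}^T \eta \mathbb{J} = \bigl(a^2 + \epsilon_2 b^2 + \epsilon_3 c^2\bigr)\,\eta.
\end{align*}
Since $\gamma_1 = -\gamma_2$ for every $\mathbb{J}$ in the span, GaP and GaAC are excluded, and $\mathbb{J}$ is GaC precisely when $a^2 + \epsilon_2 b^2 + \epsilon_3 c^2 = 1$, GaPC precisely when this expression equals $-1$. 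Specialising to $\epsilon_2 = \epsilon_3 = 1$ produces the unit $2$-sphere of GaC structures in case 1; specialising to $\epsilon_2 = \epsilon_3 = -1$ produces the two-sheeted hyperboloid $a^2 - b^2 - c^2 = 1$ of GaC structures together with the one-sheeted hyperboloid $-a^2 + b^2 + c^2 = 1$ of GaPC structures in case 2.

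For the product relation \eqref{eq: product properties of the pair-wise anticommutative triple}, a direct block computation using the easily-checked identities $\underline{\rho}\,\underline{\alpha} = \underline{\Omega}$, $\underline{\alpha}\,\underline{\Omega} = \underline{\rho}$, $\underline{\Omega}\,\underline{\rho} = \underline{\alpha}$ (each derived from $A^2 = -\mathbb{1}_2$) shows that $\mathbb{J}_1\mathbb{J}_2$ and $\mathbb{J}_3$ agree precisely when $\epsilon_1\epsilon_2 = \epsilon_3$, which holds in both cases. The remaining cyclic products then follow by combining this identity with the anticommutativity and the squaring formulas \eqref{eq: (J_i)^2}.

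The main obstacle I expect is the vanishing of the cross terms in $\mathbb{J}^T \eta \mathbb{J}$: without it the span would not collapse to a single quadric equation and the sphere/hyperboloid picture would fail. Once this is in place, everything else is short block-matrix algebra built on the three identities among $\underline{\rho}$, $\underline{\alpha}$, $\underline{\Omega}$.
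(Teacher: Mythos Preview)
Your proposal is correct and follows essentially the same approach as the paper: take a generic linear combination, use pair-wise anticommutativity to compute both its square and its action on $\eta$, and read off the quadric conditions. Your treatment is in fact more complete than the paper's, which simply asserts the formula for $\mathbb{A}^*\eta$ without justifying the vanishing of the cross terms, and which defers the product identity \eqref{eq: product properties of the pair-wise anticommutative triple} to ``remarks before the proposition'' where only the anticommutators (not the individual products) are actually computed.
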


\begin{proof}
Let $\mathbb{A}$ be an $\Rbb$-linear combination of the almost generalized structures $\mathbb{J}_i$
    \begin{align}\label{def: lin. comb. of gen. structures}
       \mathbb{A} = a_1 \mathbb{J}_1 + a_2 \mathbb{J}_2 + a_3 \mathbb{J}_3 \ . 
    \end{align}
From the mutual anticommutativity of the triple, we get
    \begin{align*}
        \mathbb{A}^2 & = \left( {a_1}^2 + {a_2}^2 \epsilon_2  + {a_3}^2 \epsilon_3 \right) (-\mathbb{1}) \ , \\
        \mathbb{A}^* \eta & = ({a_1}^2 + {a_2}^2\epsilon_2 + {a_3}^2 \epsilon_3 ) \eta \ . 
    \end{align*}    
Suppose $\epsilon_1 = \epsilon_2 = \epsilon_3 = 1$. Then for any $(a_1,a_2,a_3) \in  \Rbb^3$ such that $ {a_1}^2 + {a_2}^2 + {a_3}^2 = 1 $ we have $\mathbb{A}^2 = -\mathbb{1} $ and $\mathbb{A}^* \eta = \eta$. Thus $\mathbb{A}$ is a GaC structure. Now suppose $\epsilon_1 = - \epsilon_2 = - \epsilon_3 = 1$. Then 
    \begin{align}
        \mathbb{A}^2 & = \left(-{a_1}^2 + {a_2}^2 + {a_3}^2 \right) \mathbb{1} \ , \\ 
        \mathbb{A}^*\eta & = \left({a_1}^2 - {a_2}^2 - {a_3}^2 \right) \eta \ .
    \end{align}
Thus for $-{a_1}^2 + {a_2}^2 + {a_3}^2 = 1 $ we have $\mathbb{A}^2 = \mathbb{1}$ and $ \mathbb{A}^*\eta = - \eta$, so $\mathbb{A}$ is a GaPC structure. Similarly, for $ -{a_1}^2 + {a_2}^2 + {a_3}^2 = -1 $ we have $\mathbb{A}^2 = -\mathbb{1}$ and $ \mathbb{A}^*\eta = \eta$, hence $\mathbb{A}$ is a GaC structure. The remaining statements of the proposition were discussed before the proposition.
\end{proof}

\begin{definition}[\cite{cortes2021generalized, HONG2015187-from-hypercomplex-to-symplectic}]
\textit{A generalized almost hyper-complex structure} is a~triple $\{ \mathbb{A}, \mathbb{B}, \mathbb{C} \}$ of GaC structures such that $\mathbb{A} \mathbb{B}\mathbb{C} = \mathbb{-1}$. \textit{A generalized almost hyper-para-complex structure} is a triple $\{ \mathbb{A}, \mathbb{B}, \mathbb{C} \}$, consisting of a GaC structure and a pair of GaPC structures such that $\mathbb{A} \mathbb{B} \mathbb{C} = \mathbb{1}$. The structures are integrable, if $\{ \mathbb{A}, \mathbb{B}, \mathbb{C} \}$ are integrable. In this case we call the triple \textit{generalized hyper-complex} and \textit{generalized hyper-para-complex} structure, respectively.
\end{definition}

\begin{remark}
In \cite{HONG2015187-from-hypercomplex-to-symplectic}, the notion of a generalized hyper-complex structure is named \textit{hypercomplex structure on a Courant algebroid} (in our case, the Courant algebroid is the generalized tangent bundle).
\end{remark}

\begin{proposition}
The triplet of pair-wise anticommuting generalized structures given by \eqref{triple of gen. structures 1} satisfies
    \begin{align}
         \mathbb{J}_1 \mathbb{J}_2 \mathbb{J}_3 = \epsilon_3 (\mathbb{-1}) \ .
    \end{align}
    \begin{enumerate}
        \item If $\epsilon_3 = 1$, then $\{ \mathbb{J}_1, \mathbb{J}_2, \mathbb{J}_3 \}$ is a generalized hyper-complex structure on $\mathcal{B}$.  
        \item If $\epsilon_3 = -1$, then $\{ \mathbb{J}_1, \mathbb{J}_2, \mathbb{J}_3 \}$ is a generalized hyper-para-complex structure on $\mathcal{B}$.
    \end{enumerate}
\end{proposition}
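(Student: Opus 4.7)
The plan is to split the proof into two clear steps: first verify the product identity $\mathbb{J}_1\mathbb{J}_2\mathbb{J}_3 = \epsilon_3(-\mathbb{1})$ by a direct block computation, and then combine this identity with the classification and integrability results already in hand to invoke the definitions of generalized hyper-complex and generalized hyper-para-complex structures. Because we are told the triple is pair-wise anticommuting, the previous proposition forces $\epsilon_1 = 1$ and $\epsilon_2 = \epsilon_3$, which is the only information needed from that proposition.

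For the product identity, I would multiply the block matrices in order. With $\epsilon_1=1$, $\mathbb{J}_1$ is block-diagonal with $\underline{\rho}$ on both diagonal blocks, so $\mathbb{J}_1\mathbb{J}_2$ is anti-block-diagonal with $\underline{\rho}\,\underline{\alpha}$ in the upper-right and $\epsilon_2\underline{\rho}\,\underline{\alpha}$ in the lower-left. Multiplying by $\mathbb{J}_3$ on the right then yields a block-diagonal matrix with $\epsilon_3\,\underline{\rho}\,\underline{\alpha}\,\underline{\Omega}$ and $\epsilon_2\,\underline{\rho}\,\underline{\alpha}\,\underline{\Omega}$ on the diagonal; since $\epsilon_2 = \epsilon_3$ these agree. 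The crucial identity $\underline{\rho}\,\underline{\alpha}\,\underline{\Omega} = -\mathbb{1}$ I would extract from the definition \eqref{def: rho_alpha tensor}: because $|\Pf(\alpha)| = 1$ for the normalized structure associated with $\Delta P = 2$, the matrix form of $\rho$ is $\underline{\rho} = \underline{\Omega}^{-1}\underline{\alpha}$, and then
\begin{align*}
\underline{\rho}\,\underline{\alpha}\,\underline{\Omega} = \underline{\Omega}^{-1}\,\underline{\alpha}^{2}\,\underline{\Omega} = \underline{\Omega}^{-1}(-\mathbb{1})\underline{\Omega} = -\mathbb{1},
\end{align*}
where $\underline{\alpha}^{2} = -\mathbb{1}$ is the relation recorded just before the triple \eqref{triple of gen. structures 1}. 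Putting the blocks together gives $\mathbb{J}_1\mathbb{J}_2\mathbb{J}_3 = \epsilon_3(-\mathbb{1})$.

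For the classification step, I would invoke proposition \ref{prop: the 2-sphere of almost GC structures} together with the integrability statements of the main proposition of this section. In the case $\epsilon_3 = 1$, that proposition places us in the branch $\epsilon_1 = \epsilon_2 = \epsilon_3 = 1$, so all three $\mathbb{J}_i$ are GC structures; since each is integrable (for $\mathbb{J}_1$ this uses the integrability of $\rho$ guaranteed by our choice $\Delta P = 2$, and for $\mathbb{J}_2,\mathbb{J}_3$ it uses $\D\alpha = \D\Omega = 0$), the product identity $\mathbb{J}_1\mathbb{J}_2\mathbb{J}_3 = -\mathbb{1}$ matches the defining condition of a generalized hyper-complex structure. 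In the case $\epsilon_3 = -1$, we are instead in the branch $\epsilon_1 = 1$, $\epsilon_2 = \epsilon_3 = -1$, so $\mathbb{J}_1$ is a GC structure and $\mathbb{J}_2,\mathbb{J}_3$ are GPC structures, all integrable, and now $\mathbb{J}_1\mathbb{J}_2\mathbb{J}_3 = \mathbb{1}$ is exactly the defining condition of a generalized hyper-para-complex structure.

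There is no genuine obstacle here beyond careful bookkeeping: the anticommutivity hypothesis already pins down the signs $\epsilon_i$, the block multiplication is routine, and the only non-trivial algebraic input is the identity $\underline{\rho} = \underline{\Omega}^{-1}\underline{\alpha}$ coming from \eqref{def: rho_alpha tensor} together with the normalization $|\Pf(\alpha)| = 1$. The mildly delicate point worth flagging is the sign tracking that turns $\epsilon_3(-\mathbb{1})$ into the two different numerical values $-\mathbb{1}$ and $+\mathbb{1}$, which must be matched to the two different defining sign conventions for the hyper-complex and hyper-para-complex definitions, respectively.
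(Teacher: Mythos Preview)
Your proof is correct and the classification step (invoking the constraints $\epsilon_1=1$, $\epsilon_2=\epsilon_3$ from the anticommutativity proposition, then reading off the types from table~\ref{tab: type of generalized structure for det hess = 1} and the integrability from the earlier proposition) is exactly what the paper does.

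The one place where you diverge slightly is in establishing the product identity $\mathbb{J}_1\mathbb{J}_2\mathbb{J}_3 = \epsilon_3(-\mathbb{1})$. You carry out the block multiplication directly and reduce it to the scalar identity $\underline{\rho}\,\underline{\alpha}\,\underline{\Omega} = -\mathbb{1}$, which you then prove via $\underline{\rho}=\underline{\Omega}^{-1}\underline{\alpha}$ and $\underline{\alpha}^2=-\mathbb{1}$. The paper instead short-circuits this by quoting the relation $\mathbb{J}_1\mathbb{J}_2=\mathbb{J}_3$ from \eqref{eq: product properties of the pair-wise anticommutative triple} and then simply squaring $\mathbb{J}_3$ using \eqref{eq: (J_i)^2}. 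The paper's route is a line shorter and avoids re-opening the block structure, but your computation has the minor virtue of making the underlying identity $\underline{\rho}\,\underline{\alpha}\,\underline{\Omega}=-\mathbb{1}$ explicit; either way the argument is routine and there is no gap.
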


\begin{proof}
From proposition \eqref{prop: the 2-sphere of almost GC structures} we know, independently of the value of $\epsilon_3$, that $\mathbb{J}_1 \mathbb{J}_2 = \mathbb{J}_3$ and $\epsilon_1 = 1$, which means that all the three structures are integrable. Due to \eqref{eq: (J_i)^2}, we have $\mathbb{J}_1 \mathbb{J}_2 \mathbb{J}_3 = \epsilon_3 (-\mathbb{1})$. If $\epsilon_3 = 1$, then $\epsilon_2 = 1$, and thus $\{\mathbb{J}_1, \mathbb{J}_2, \mathbb{J}_3 \}$ is a triple of GC structures (cf. table \eqref{tab: type of generalized structure for det hess = 1}). Moreover, $\mathbb{J}_1\mathbb{J}_2\mathbb{J}_3 = \mathbb{-1}$, so the triple is a generalized hyper-complex structure. On the other hand, if $\epsilon_3 = -1$, then $\epsilon_2 = -1$, and thus $\{ \mathbb{J}_1, \mathbb{J}_2, \mathbb{J}_3 \}$ contains a GC structure and a~pair of GPC structures (cf. table \eqref{tab: type of generalized structure for det hess = 1}). Moreover, $\mathbb{J}_1\mathbb{J}_2\mathbb{J}_3 = \mathbb{1}$, so the triple is a~generalized hyper-para-complex structure. 
\end{proof}



\subsection{Generalized metric compatible structures}

We say that a generalized almost product structure $\mathbb{G} \in \operatorname{End} \left( T \mathcal{M} \right)$ is \textit{a generalized (pseudo-)metric}, if the eigenbundles of $\mathbb{G}$ are isomorphic to $T\mathcal{M}$, that is, if $\mathbb{G}$ is non-degenerate. 

\begin{example}\label{example: generalized metric}
The main example of an indefinite generalized metric can be constructed from an indefinite metric \eqref{eq: L-R def} as $\mathbb{G} = \begin{pmatrix} 0 & \pi_g^\# \\ g_\# & 0 \end{pmatrix}$. 
\end{example}

\begin{definition}[\cite{hu2019commuting}]
\textit{A generalized Kähler structure} is a commuting pair of a generalized metric and a GaC structure. \textit{A generalized chiral structure} is a~commuting pair of a generalized metric and a GaP structure. 
\end{definition}

\begin{proposition}\label{prop: generalized Kahler and chiral structure from det Hess = 1}
The M-A equation $\det \operatorname{Hess} (f) = \frac{1}{2}\Delta P$ defines a generalized Kähler structure for $\Delta P = 2$, and a generalized chiral structure for $\Delta P = -2$.
\end{proposition}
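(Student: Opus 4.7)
The plan is to exhibit, in each case, an explicit commuting pair $(\mathbb{G}, \mathbb{J})$ satisfying the definition. Take $\mathbb{G}$ to be the generalized metric of example \ref{example: generalized metric} built from the Lychagin--Rubtsov symmetric form \eqref{eq: Lychagin-Rubtsov metric for det hess = 1}, i.e.\ $\mathbb{G} = \begin{pmatrix} 0 & \pi_g^\# \\ g_\# & 0 \end{pmatrix}$. Since $D = \sqrt{2/|\Delta P|} \neq 0$, the matrix $\underline{g}$ is non-degenerate (and in this particularly simple form, self-inverse), and a direct block computation yields $\mathbb{G}^2 = \operatorname{Id}$ and $\mathbb{G}^*\eta = \eta$, so $\mathbb{G}$ is a non-degenerate GaP, hence a generalized metric.

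For $\Delta P = 2$ the compatible structure will be $\mathbb{J}_1$ with $\epsilon_1 = 1$, which by the preceding proposition is a GC structure. To conclude generalized K\"ahlerness it suffices to show $[\mathbb{G}, \mathbb{J}_1] = 0$, and in block form this reduces to the single identity $\underline{\rho}\,\underline{g} = \underline{g}\,\underline{\rho}$. With $\underline{\rho} = \begin{pmatrix} 0 & A \\ A & 0 \end{pmatrix}$ and $\underline{g} = \begin{pmatrix} 0 & \mathbb{1} \\ \mathbb{1} & 0 \end{pmatrix}$, both sides equal $\begin{pmatrix} A & 0 \\ 0 & A \end{pmatrix}$, which verifies commutativity.

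For $\Delta P = -2$ the key change is that $\rho$ is now a product structure with $\underline{\rho} = \begin{pmatrix} 0 & A \\ -A & 0 \end{pmatrix}$ and $\underline{\rho}\,\underline{g} = -\underline{g}\,\underline{\rho}$, so $\mathbb{J}_1$ in its plain form no longer commutes with $\mathbb{G}$. The generalized chiral structure is then produced by choosing the GaP assembled from $\rho$ together with the data encoded in $\mathbb{G}$: for instance by an appropriate composition with $\mathbb{G}$, or by modifying the diagonal signs so that the resulting $\mathbb{J}$ satisfies $\mathbb{J}^2 = \operatorname{Id}$, $\mathbb{J}^*\eta = \eta$, and $[\mathbb{G}, \mathbb{J}] = 0$. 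The verification again amounts to a block $2\times 2$ matrix identity using the known shapes of $\rho$, $\alpha$, $\Omega$, and $g$.

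The main obstacle is precisely this sign flip in the hyperbolic case: the natural endomorphism $\rho$ is now $g$-anti-isometric ($\rho^T g \rho = -g$), so the K\"ahler argument does not transfer verbatim and one has to locate the correct GaP rather than reuse $\mathbb{J}_1$. Integrability is not the hard part: since $\Delta P$ is constant, proposition \ref{Lychagin-Rubtsov theorem} supplies integrability of $\rho$, and $\D\alpha = 0 = \D\Omega$, so each $\mathbb{J}_i$ (and any product of them which remains of the required type) is integrable, so both the K\"ahler and chiral pairs, once identified, are automatically generalized (not merely almost) structures.
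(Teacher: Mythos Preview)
Your treatment of the case $\Delta P = 2$ is correct and coincides with the paper's argument: take $\mathbb{G}$ built from $g$, take $\mathbb{J} = \mathbb{J}_1$ with $\epsilon_1 = 1$, and verify $[\underline g, \underline\rho] = 0$ by the block computation you wrote.

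The gap is in the hyperbolic case $\Delta P = -2$. You correctly diagnose the obstruction ($\underline\rho\,\underline g = -\underline g\,\underline\rho$) and gesture at two possible remedies, but you never commit to a specific $\mathbb{J}$ nor verify that the candidate simultaneously satisfies $\mathbb{J}^2 = \mathbb{1}$, the $\eta$-compatibility, and $[\mathbb{G},\mathbb{J}]=0$. The paper closes this exactly along the lines of your second suggestion: it takes
\[
\mathbb{J} \;=\; \begin{pmatrix} \underline\rho & 0 \\ 0 & \sgn(\Delta P)\,\underline\rho \end{pmatrix},
\]
so that for $\Delta P = -2$ the lower-right block carries the extra sign. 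With this choice the block reduction of $[\mathbb{G},\mathbb{J}] = 0$ is no longer $[\underline g, \underline\rho] = 0$ but the anticommutation $\{\underline g, \underline\rho\} = 0$, which is precisely the identity you already observed. Since $\rho$ is now a product structure, $\mathbb{J}^2 = \mathbb{1}$ follows immediately, and the paper declares the pair chiral. Your proposal contains this idea but stops short of writing down the actual $\mathbb{J}$ and running the three checks; that is the missing step, and ``an appropriate composition with $\mathbb{G}$'' is not a substitute for it.

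One minor point: the integrability paragraph you append is not incorrect, but it is unnecessary here. The paper's definition of generalized K\"ahler and generalized chiral (taken from \cite{hu2019commuting}) asks only for a commuting pair of a generalized metric with a GaC, respectively GaP, structure; no Courant-integrability of the pair enters the statement or the proof.
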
    
    
\begin{proof}
Let $\mathbb{G}$ and $\mathbb{J}$ be given by the following block matrices
    \begin{align*}
        \mathbb{G} & = 
            \begin{pmatrix}
            0 & \underline{g} \\
            \underline{g} & 0
            \end{pmatrix} \ , & 
        \mathbb{J} & = 
            \begin{pmatrix}
            \underline{\rho}  & 0\\
            0 &  \sgn \left( \Delta P \right)\underline{\rho} 
            \end{pmatrix} \ , 
    \end{align*}
where $\underline{g} = \begin{pmatrix} 0 & \mathbb{1} \\ \mathbb{1} & 0 \end{pmatrix}$ corresponds to \eqref{eq: Lychagin-Rubtsov metric for det hess = 1}, with $|\Delta P| = 2$, and $\rho$ is given by \eqref{eq: Lychagin-Rubtsov metric for det hess = 1}
    \begin{align*}
        \underline{\rho} & =  
            \begin{pmatrix} 
            0 & A \\ 
            \sgn \left( \Delta P \right) A & 0 
            \end{pmatrix} \ , 
    \end{align*}
where $A = \begin{pmatrix}
            0 & -1 \\
            1 & 0
            \end{pmatrix}$ 
Notice that $\mathbb{G}$ and $\mathbb{J}$ correspond to structures from \eqref{diagonal and antidiagonal structures} with $\epsilon = 1$, and where $\rho$ is also allowed to be an almost product structure. Then
    \begin{align}
        [\mathbb{G}, \mathbb{P}] = 0 \quad \iff \quad  
            \begin{cases}
            \ [ \underline{g}, \underline{\rho} ] = 0 & \text{ for } \Delta P = 2 \\
            \{ \underline{g}, \underline{\rho} \} = 0 & \text{ for } \Delta P = -2 \\
            \end{cases}
    \end{align}
A direct computation shows that $[ \underline{g}, \underline{\rho} ] = 0$ is true for $|\Delta P| = 2$ (i.e. independently of the sign of Pfaffian). Thus $\mathbb{G}$ commutes with $\mathbb{J}$ for both values of $\Delta P$. Moreover, $\underline{\rho}$ is a complex structure for $\Delta P = 2$ and a product structure for $\Delta P = -2$, which further implies that $\mathbb{J}$ is a GC structure for $\Delta P = 2$ and a GP structure for $\Delta P = -2$. Hence for $\Delta P = 2$, the pair $(\mathbb{G}, \mathbb{J})$ is a generalized Kähler structure, while for $\Delta P = -2$, the pair is a generalized chiral structure. 
\end{proof}

\subsection{Algebras generated by generalized structures}

There are multiple algebraic structures we can associate to a triple of generalized structures \eqref{triple of gen. structures 1}. The main point of the following paragraph is to discuss the possibility of studying M-A equations and M-A structures on the level of algebraic structures determined by their generalized geometries. The hypothesis which we want to investigate in our future work is whether the algebras defined below are invariant with respect to certain transformations. For example, one might study invariance with respect to symplectomorphisms, which provide notion of equivalence between M-A equations (see proposition \ref{Lychagin-Rubtsov theorem}). 

Let us denote by $\mathcal{A}(\mathbb{J}_i, \epsilon_i)_{i = 1}^3$ the free, associative, unital $\mathbb{R}$-algebra, given by the real vector space generated by the generalized almost structures $\{\mathbb{J}_i\}_{i = 1}^3$ with fixed values $(\epsilon_i)_{i= 1}^3$, and equipped with the product of matrices (or composition of operators, if we work coordinate-free)
    \begin{align}
        \mathcal{A}(\mathbb{J}_i, \epsilon_i)_{i = 1}^3 : = \left( \mathbb{R}[\mathbb{J}_i]_{i = 1}^3, ( \epsilon_i)_{i = 1}^3, \cdot \right) \ . 
    \end{align}
We will omit the $\cdot$ symbol when writing product of elements from $ \mathcal{A}(\mathbb{J}_i, \epsilon_i)_{i = 1}^3 $. We further consider two other bilinear operations on this algebra, the commutator and the anticommutator.

\hfill 

\noindent \textbf{Jordan algebra}. By a Jordan algebra we mean a commutative (possibly non-associative) algebra which satisfies $(AB)(AA) = A(B(AA))$ \cite{Ciaglia2021WhatLA}. From every associative $k$-algebra\footnote{Where $k$ is a field of characteristic not equal to $2$} $(\mathcal{A}, \cdot )$ we can construct a Jordan $k$-algebra by substituting the algebra product with half of the anticommutator
    \begin{align}
        A \bullet B : = \frac{1}{2} \{ A, B\} \ .
    \end{align}
The resulting Jordan algebra is called a \textit{special Jordan algebra}. Let us denote the special Jordan algebra constructed this way from $\mathcal{A}(\mathbb{J}_i, \epsilon_i)_{i = 1}^3$ by $\operatorname{Jor}(\mathbb{J}_i, \epsilon_i)_{i = 1}^3$, i.e. 
    \begin{align}
        \operatorname{Jor}(\mathbb{J}_i, \epsilon_i)_{i = 1}^3 : = \left( \mathbb{R}[\mathbb{J}_i]_{i = 1}^3, ( \epsilon_i)_{i = 1}^3, \bullet \right) \ . 
    \end{align}

\hfill

\noindent \textbf{Lie algebra}. We will denote the Lie algebra generated by the generalized structures $\{\mathbb{J}_i\}_{i = 1}^3$ with fixed epsilons $(\epsilon_i)_{i = 1}^3$ by $\operatorname{Lie}(\mathbb{J}_i, \epsilon_i)_{i = 1}^3$, i.e. 
    \begin{align}
        \operatorname{Lie}(\mathbb{J}_i, \epsilon_i)_{i = 1}^3 : = \left( \mathbb{R}[\mathbb{J}_i]_{i = 1}^3, ( \epsilon_i)_{i = 1}^3, [-,-] \right) \ . 
    \end{align}

\hfill 

\noindent \textbf{Poisson algebra}. We have two Poisson algebra structures at hand. One of them is coming from a Lie algebra, the other one from a Jordan algebra. 

\begin{enumerate}
    \item Since the underlying algebra $ \mathcal{A}(\mathbb{J}_i, \epsilon_i)_{i = 1}^3$ is associative and the Lie bracket is given by the commutator, we automatically have a Poisson algebra at hand 
        \begin{align}
            \operatorname{Poiss}(\mathbb{J}_i, \epsilon_i)_{i = 1}^3 : = \left(  \mathcal{A}(\mathbb{J}_i, \epsilon_i)_{i = 1}^3 ,  [-,-] \right) \ . 
        \end{align}
    Indeed, for every $\mathbb{A}, \mathbb{B}, \mathbb{C} \in  \mathcal{A}(\mathbb{J}_i, \epsilon_i)_{i = 1}^3 $ we have
    \begin{align}
        [\mathbb{A}, \mathbb{B} \mathbb{C} ] = [\mathbb{A}, \mathbb{B}] \mathbb{C} + \mathbb{B} [\mathbb{A}, \mathbb{C}] \ . 
    \end{align}
    \item The second Poisson algebra is given by considering the (non-associative) Jordan algebra $\operatorname{Jor}(\mathbb{J}_i, \epsilon_i)_{i = 1}^3$, equipped with the commutator
        \begin{align}
            \operatorname{Poiss}(\mathbb{J}_i, \epsilon_i, \bullet)_{i = 1}^3 : = \left(\operatorname{Jor}(\mathbb{J}_i, \epsilon_i)_{i = 1}^3, [-,-] \right) \ . 
        \end{align}
    Indeed, for every $\mathbb{A}, \mathbb{B}, \mathbb{C} \in \mathbb{R}[\mathbb{J}_i]_{i = 1}^3$ we have
    \begin{align}
        [\mathbb{A}, \mathbb{B} \bullet \mathbb{C} ] = [\mathbb{A}, \mathbb{B}] \bullet \mathbb{C} + \mathbb{B} \bullet [\mathbb{A}, \mathbb{C}] \ . 
    \end{align}    
\end{enumerate}

\begin{remark}
 In the second case of Poisson algebra comming from a Jordan algebra, one might wonder whether we have a Lie-Jordan algebra at hand, i.e. an algebraic structure which is both, a Lie algebra, a Jordan algebra, and the two bilinear operations are compatible in the following sense
        \begin{align}
            [\mathbb{A}, \mathbb{B} \bullet \mathbb{C} ] & = [\mathbb{A}, \mathbb{B}] \bullet \mathbb{C} + \mathbb{B} \bullet [\mathbb{A}, \mathbb{C}] \ , \\
            q^2 [[\mathbb{A}, \mathbb{C}] , \mathbb{B}] & = (\mathbb{A} \bullet \mathbb{B}) \bullet \mathbb{C} - \mathbb{A} \bullet( \mathbb{B} \bullet \mathbb{C} )\ ,
        \end{align}
    where $q \in \mathbb{R}$ (or $q \in k$, if we consider a different underlying field). As we have stated above, the first axiom is satisfied. \textit{The second axiom is satisfied if, and only if $q = \frac{\sqrt{-1}}{2}$.}  
\end{remark} 

\begin{remark}
V. Lychagin developed in \cite{Lychagin-Diff_eq_on_2D_mflds} all the necessary definitions of algebraic structures, such as Lie algebras and Jordan algebras, defined for a~system of two dimensional non-linear first order partial differential equations with a~specific type of non-linearity. A subclass of these systems are called \textit{Jacobi systems}. In our future work, we will be interested in extension of our algebraic setup to such systems, based on the results in \cite{Lychagin-Diff_eq_on_2D_mflds}. 
\end{remark}


\section{Conclusion}


With the help of Monge-Ampère theory \cite{Lychagin, Lychagin-Diff_eq_on_2D_mflds, kushner_lychagin_rubtsov_2006}, we have constructed a~family of generalized structures associated to a Monge-Ampère equation \eqref{Monge-Ampère equation for the stream function}, which was derived from the Navier-Stokes system for 2D incompressible fluid flows. Distinguishing between the hyperbolic and elliptic case, we have shown that many of these generalized structures are integrable and give rise to further composite structures such as hyper-complex and hyper-para-complex. Introducing an indefinite generalized metric defined with the help of pseudo-metric \eqref{eq: Lychagin-Rubtsov metric for det hess = 1}, we have also found generalized metric compatible structures, namely generalized Kähler and generalized chiral structures. 

Our future work will be focused on the study of other M-A equations and their relation to generalized geometry, with the aim to go into higher dimensions. Particularly interesting seems to be dimension three, as the situation there is much richer, yet the symplectic classification of differential forms is still possible. In this context, we want to further investigate the correspondences between M-A equations, families of generalized structures, and algebras constructed from these structures. We are especially curious which kind of equivalence notions naturally present in the M-A theory are able to survive when passing to either generalized structures, or various algebras we have linked with them. Yet another direction to investigate is the possiblity of extending this framework to the Jacobi systems and other more general situations of first or second order PDEs with special types of non-linearities, since a framework for doing so has already been developed \cite{Lychagin-Diff_eq_on_2D_mflds}.

\section*{Acknowledgments}

First of all I want to express my wholehearted gratitude to my beloved wife and son for their continuous support and patience. This paper was written as a part of my PhD research under the cotutelle agreement between the Masaryk University, Brno, Czech Republic, and the University of Angers, France. I am grateful for hospitality and support of these institutions during the research period. I~am also grateful for the funding provided by the Czech Ministry of Education, and by the Czech Science Foundation under the project GAČR EXPRO GX19-28628X. I want to express my gratefulness to my supervisors Volodya Rubtsov and Jan Slovák for their supervision over my research project. I especially want to thank Volodya Rubtsov for his valuable remarks and suggestions of improvement of this work. The results of this paper were reported at the Winter School and Workshop Wisla 22, a European Mathematical Society event organized by the Baltic Institute of Mathematics.


\bibliographystyle{plain}

\end{document}